\documentclass[letterpaper, 10 pt, conference]{ieeeconf}  % Comment this line out if you need a4paper

\IEEEoverridecommandlockouts                              % This command is only needed if 
\usepackage{amsmath}
\usepackage{amssymb}
\usepackage{graphicx} 
\usepackage{graphicx}
\usepackage{subcaption}

\usepackage{amsfonts}

\usepackage{amsthm}

\theoremstyle{plain} 
\newtheorem{theorem}{Theorem}

\newtheorem{proposition}{Proposition}

\theoremstyle{definition}
\newtheorem{definition}{Definition}
\newtheorem{assumption}{Assumption}

\theoremstyle{remark}

\title{\LARGE \bf
Performative Scenario Optimization
}

\author{Quanyan Zhu and Zhengye Han% <-注意这里要有一个百分号防空格
\thanks{Q. Zhu and Z. Han are with the Department of Electrical and Computer Engineering, New York University, Brooklyn, NY 11201 USA (e-mail: {\tt\small qz494@nyu.edu}; {\tt\small zh3286@nyu.edu}).}%
}

\begin{document}

\maketitle
\thispagestyle{empty}
\pagestyle{empty}

\begin{abstract}
This paper introduces a performative scenario optimization framework for decision-dependent chance-constrained problems. Unlike classical stochastic optimization, we account for the feedback loop where decisions actively shape the underlying data-generating process. We define performative solutions as self-consistent equilibria and establish their existence using Kakutani’s fixed-point theorem. To ensure computational tractability without requiring an explicit model of the environment, we propose a model-free, scenario-based approximation that alternates between data generation and optimization. Under mild regularity conditions, we prove that a stochastic fixed-point iteration, equipped with a logarithmic sample size schedule, converges almost surely to the unique performative solution. The effectiveness of the proposed framework is demonstrated through an emerging AI safety application: deploying performative guardrails against Large Language Model (LLM) jailbreaks. Numerical results confirm the co-evolution and convergence of the guardrail classifier and the induced adversarial prompt distribution to a stable equilibrium.
\end{abstract}

%%%%%%%%%%%%%%%%%%%%%%%%%%%%%%%%%%%%%%%%%%%%%%%%%%%%%%%%%%%%%%%%%%%%%%%%%%%%%%%%
\section{Introduction}
In many modern decision-making systems, uncertainty is not exogenous, but is instead shaped by the deployed decision itself. As agents adapt, learn, and respond strategically, the data-generating process becomes endogenous and evolves in response to the system. This phenomenon is increasingly prominent in applications such as algorithmic marketplaces, recommendation and ranking systems, autonomous cyber-defense, and large language model (LLM)-driven agent ecosystems. In particular, in security-critical settings, these responses may take the form of evasion or attack, where adversaries probe the system, adapt their strategies, and manipulate inputs to bypass safeguards. Such behaviors have been extensively studied in adversarial machine learning and security through game-theoretic formulations of evasion and poisoning attacks~\cite{hu2024game, zhang2017game, zhang2020security}. Importantly, these adversaries are typically unknown, heterogeneous, and evolving, and therefore cannot be reliably characterized a priori. As a result, the distribution of uncertainty is no longer fixed, but depends on the decision itself, giving rise to a performative feedback loop between the decision and the environment.

Existing approaches to performative optimization address this feedback by modeling the dependence between decisions and the induced distribution. These methods are inherently stochastic and model-based, building on classical stochastic programming and decision-dependent uncertainty frameworks~\cite{shapiro2021lectures}, and more recent formulations of performative prediction and strategic classification~\cite{perdomo2020performative, cohen2024bayesian, mendler2020stochastic, narang2023multiplayer}. They require either an explicit specification or an accurate estimation of how the decision shapes the data-generating process. In doing so, they implicitly rely on assumptions about the agent population, their response mechanisms, and often their learning or equilibrium behavior. In adversarial settings, this further entails modeling attacker objectives and adaptation strategies, often through game-theoretic formulations such as Stackelberg models~\cite{hantoward, liu2025stackelberg}. However, in many modern systems, such assumptions are difficult to justify. Agent behavior may be heterogeneous, nonstationary, and only partially observable, while adversaries may adopt unforeseen or rapidly evolving strategies. Consequently, model-based approaches are fundamentally sensitive to misspecification and may fail to capture the true feedback dynamics present in practice.

In contrast, we adopt a performative scenario optimization perspective that avoids explicit modeling of the induced distribution. Rather than characterizing the mapping from decisions to distributions, we treat the environment as a black-box sampling mechanism and construct decisions directly from data generated under the current deployment. This approach builds on scenario-based and randomized methods for uncertain systems~\cite{calafiore2002randomized, formentin2014scenario, tempo1996probabilistic, tempo2005randomized}. In this sense, the approach is both model-free and distribution-free in an operational sense: it does not require knowledge of the data-generating process, nor of the agent or population model that produces it, including adversarial responses or learning dynamics. Instead, the effects of unknown and adaptive agents are incorporated implicitly through the observed samples. This leads to a tractable scenario-based approximation of the performative problem, where feasibility is enforced empirically using data, and theoretical guarantees depend only on the number of samples and the dimension of the decision space.

This perspective is particularly well-suited for adversarial and evolving environments. As the environment evolves, new samples reflect the updated responses of agents or adversaries, enabling the decision to adapt accordingly. In this way, the framework supports an online and iterative procedure in which optimization and data generation are coupled, while remaining robust to unknown, heterogeneous, and dynamically changing behaviors.

Specifically, the main contributions of this paper are threefold:
\begin{itemize}
    \item \textbf{Novel Performative Framework:} We introduce a performative scenario optimization framework for chance-constrained problems and formally establish the existence of self-consistent equilibria (performative solutions) utilizing Kakutani’s fixed-point theorem.
    \item \textbf{Tractable Algorithmic Design:} To overcome the intractability of unknown induced distributions, we propose a model-free, scenario-based approximation algorithm that alternates between data generation and optimization, effectively decoupling the performative feedback loop into computable steps.
    \item \textbf{Convergence Guarantees and Validation:} Under mild regularity conditions, we prove that our stochastic fixed-point iteration, equipped with a logarithmic sample size schedule, converges almost surely to the unique performative solution. We demonstrate its practical effectiveness through a performative guardrail deployment problem in LLM safety.
\end{itemize}

\section{PRELIMINARIES AND PROBLEM STATEMENT}

\subsection{Decision-Dependent Chance-Constrained Problem}
Let $(\Xi, \mathcal{F})$ be a measurable space representing uncertainty, and let $\xi \in \Xi$ denote a realization of the uncertain environment. Let $X \subset \mathbb{R}^{d}$ be a nonempty compact set of admissible decisions. When needed, we additionally assume that $X$ is convex.
We consider a measurable function $g: X \times \Xi \rightarrow \mathbb{R}$, where $g(x, \xi)$ represents the performance level (or constraint value) associated with decision $x \in X$ under realization $\xi \in \Xi$. Fix a threshold $\gamma \in \mathbb{R}$ and a tolerance level $\varepsilon \in(0,1)$.
For each $x \in X$, let $\mathbb{P}_{x}$ be a probability measure on $(\Xi, \mathcal{F})$. The key modeling feature is that the probability law governing uncertainty depends on the decision $x$.

\begin{assumption}[Regularity]
\label{am:Regularity}
The following conditions hold:\\
(A1) The objective function $f: X \rightarrow \mathbb{R}$ is continuous. When needed, $f$ is assumed convex.\\
(A2) The mapping $g: X \times \Xi \rightarrow \mathbb{R}$ is jointly measurable.\\
(A3) For each $\xi \in \Xi$, the mapping $x \mapsto g(x, \xi)$ is continuous on $X$. When needed, it is convex.\\
(A4) For each $x \in X, \mathbb{P}_{x}$ is a probability measure on $(\Xi, \mathcal{F})$.\\
(A5) The mapping $x \mapsto \mathbb{P}_{x}$ is weakly continuous, i.e., for every bounded continuous function 
$\varphi: \Xi \rightarrow \mathbb{R}$, the map $x \mapsto \int_{\Xi} \varphi(\xi) d \mathbb{P}_{x}(\xi)$ is continuous on $X$.     
\end{assumption}

For a given decision $x \in X$, the probability that the performance satisfies the threshold requirement is $\mathbb{P}_{x}(g(x, \xi) \leq \gamma)$. Accordingly, we say that $x$ is chance-feasible if $\mathbb{P}_{x}(g(x, \xi) \leq \gamma) \geq 1-\varepsilon $, or equivalently $\mathbb{P}_{x}(g(x, \xi)>\gamma) \leq \varepsilon$.
We define the decision-dependent chance-feasible set by $X_{\varepsilon}^{\gamma}:=\left\{x \in X: \mathbb{P}_{x}(g(x, \xi) \leq \gamma) \geq 1-\varepsilon\right\}.$ The corresponding population formulation is
\begin{equation}
\label{population formulation}
\min _{x \in X} f(x) \quad \text { subject to } \quad \mathbb{P}_{x}(g(x, \xi) \leq \gamma) \geq 1-\varepsilon 
\end{equation}

However, \eqref{population formulation} is not a standard optimization problem, because the feasible set depends on the decision variable itself through the mapping $x \mapsto \mathbb{P}_{x}$. As a result, feasibility and optimality must be interpreted jointly.

To make this dependence explicit, for each reference decision $x \in X$ we define the decision dependent feasibility correspondence
\begin{equation}
\label{decision dependent feasibility correspondence}
\mathcal{X}_{\varepsilon}^{\gamma}(x):=\left\{y \in X: \mathbb{P}_{x}(g(y, \xi) \leq \gamma) \geq 1-\varepsilon\right\} 
\end{equation}

Thus, $\mathcal{X}_{\varepsilon}^{\gamma}(x)$ consists of all decisions $y$ that are feasible when performance is evaluated under the probability law induced by $x$.

\begin{definition}[Performative chance-constrained problem]
\label{Performative_chance-constrained_problem}
The performative chance-constrained problem consists of finding a decision $x^{\star} \in X$ such that
\begin{equation}
\label{eq:Performative chance-constrained problem}
x^{\star} \in \arg \min _{y \in \mathcal{X}_{\varepsilon}^{\gamma}\left(x^{\star}\right)} f(y)
\end{equation}
A point $x^{\star}$ satisfying (\ref{eq:Performative chance-constrained problem}) is called a performative solution. Equivalently, $x^{\star}$ satisfies: (i) feasibility: $\mathbb{P}_{x^{\star}}\left(g\left(x^{\star}, \xi\right) \leq \gamma\right) \geq 1-\varepsilon$, and (ii) optimality: $f\left(x^{\star}\right) \leq f(y)$ for all $y \in \mathcal{X}_{\varepsilon}^{\gamma}\left(x^{\star}\right)$.
\end{definition}

This formulation highlights a self-consistency requirement: the decision $x^{\star}$ determines the probability measure $\mathbb{P}_{x^{\star}}$, which in turn defines the feasible set $\mathcal{X}_{\varepsilon}^{\gamma}\left(x^{\star}\right)$ over which optimality is evaluated. Thus, a performative solution must be optimal with respect to a feasible region that it itself induces. This distinguishes the performative setting from classical chance-constrained optimization, where the probability measure is fixed independently of the decision variable.

\subsection{Performative Solution and Best-Response Mapping}
We build on the performative chance-constrained problem defined in Definition~\ref{Performative_chance-constrained_problem}. In this setting, feasibility depends on the probability measure induced by the decision itself, and therefore optimality must be evaluated relative to an endogenous feasible set.

To analyze this structure, we adopt a two-stage viewpoint. For a given reference decision $x \in X$, we evaluate the feasibility of another decision $y \in X$ under the probability law $\mathbb{P}_{x}$. This yields the decision-dependent feasible set $\mathcal{X}_{\varepsilon}^{\gamma}(x)$ defined in \eqref{decision dependent feasibility correspondence}.

\begin{definition}[Best-response operator]
The set-valued mapping $\Phi: X \rightrightarrows X$ is defined by 

\begin{equation}
\label{eq:Best-response operator}
\Phi(x):=\arg \min _{y \in \mathcal{X}_{\varepsilon}^{\gamma}(x)} f(y).    
\end{equation}

For each $x \in X$, the set $\Phi(x)$ consists of all optimal decisions when performance is evaluated under the environment generated by $x$.    
\end{definition}

\begin{definition}[Performative solution]
 A point $x^{\star} \in X$ is called a performative solution if it satisfies
\begin{equation}
\label{Performative solution}
x^{\star} \in \Phi\left(x^{\star}\right)
\end{equation}

\end{definition}

By Definition ~\ref{Performative_chance-constrained_problem}, \eqref{Performative solution} is precisely the solution condition of the performative chance-constrained problem. Equivalently, $x^{\star}$ satisfies both: (i) optimality: $x^{\star} \in \arg \min _{y \in \mathcal{X}_{\varepsilon}^{\gamma}\left(x^{\star}\right)} f(y)$, and (ii) feasibility: $\mathbb{P}_{x^{\star}}\left(g\left(x^{\star}, \xi\right) \leq \gamma\right) \geq 1-\varepsilon$.

Thus, a performative solution is a decision that is optimal with respect to a feasible region that it itself induces. This self-consistency distinguishes the performative setting from classical chance-constrained optimization, where the probability measure is fixed independently of the decision variable. We now study structural properties of $\Phi$.

\begin{assumption}[Boundary regularity]
\label{am:Boundary regularity}
For every $(x, y) \in X \times X$, we have $\mathbb{P}_{x}(g(y, \xi)=\gamma)=0$.
This assumption excludes probability mass at the threshold and ensures stability of feasibility under limits.
\end{assumption}

\begin{proposition}[Compactness of feasible sets and best responses]
\label{pro:Compactness of feasible sets and best responses}
 Suppose that Assumptions \ref{am:Regularity} and \ref{am:Boundary regularity} hold and $\mathcal{X}_{\varepsilon}^{\gamma}(x) \neq \varnothing$. Then $\mathcal{X}_{\varepsilon}^{\gamma}(x)$ is compact and $\Phi(x)$ is nonempty and compact.      
\end{proposition}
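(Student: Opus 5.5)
Fix $x \in X$. Since $\mathcal{X}_{\varepsilon}^{\gamma}(x) \subseteq X$ and $X$ is compact in $\mathbb{R}^d$, compactness of $\mathcal{X}_{\varepsilon}^{\gamma}(x)$ follows once it is closed. The rest is then Weierstrass: $f$ is continuous by (A1), so it attains its minimum on the nonempty compact set $\mathcal{X}_{\varepsilon}^{\gamma}(x)$. Hence $\Phi(x) \neq \varnothing$. Moreover, $\Phi(x) = \{y \in \mathcal{X}_{\varepsilon}^{\gamma}(x) : f(y) \le f^\ast(x)\}$, where $f^\ast(x)$ is the minimal value. This is a closed subset of a compact set, by continuity of $f$, and is therefore compact. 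The whole argument thus reduces to showing that $y \mapsto \mathbb{P}_x(g(y,\xi) \le \gamma)$ is upper semicontinuous on $X$, or even continuous, since then its superlevel set at $1-\varepsilon$ is closed.

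To prove closedness, I take a sequence $y_k \in \mathcal{X}_{\varepsilon}^{\gamma}(x)$ with $y_k \to y \in X$, and show $\mathbb{P}_x(g(y,\xi) \le \gamma) \ge 1-\varepsilon$. The measure $\mathbb{P}_x$ is fixed here, so only (A2), (A3) and Assumption~\ref{am:Boundary regularity} are needed; weak continuity (A5) is not. Set $\mathbf{1}_k(\xi) := \mathbf{1}\{g(y_k,\xi) \le \gamma\}$. These are measurable functions of $\xi$ by (A2), since sections of jointly measurable maps are measurable. For each fixed $\xi$, continuity of $g(\cdot,\xi)$ from (A3) gives $g(y_k,\xi) \to g(y,\xi)$.

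I then split into cases. If $g(y,\xi) < \gamma$, then eventually $g(y_k,\xi) < \gamma$, so $\mathbf{1}_k(\xi) \to 1$. If $g(y,\xi) > \gamma$, then eventually $\mathbf{1}_k(\xi) = 0$. Hence $\mathbf{1}_k(\xi) \to \mathbf{1}\{g(y,\xi) \le \gamma\}$ for every $\xi$ outside the set $\{g(y,\xi) = \gamma\}$, which is $\mathbb{P}_x$-null by Assumption~\ref{am:Boundary regularity}. The indicators are bounded by $1$, so dominated convergence gives
\begin{equation*}
\mathbb{P}_x(g(y,\xi) \le \gamma) = \lim_{k\to\infty} \mathbb{P}_x(g(y_k,\xi) \le \gamma) \ge 1-\varepsilon .
\end{equation*}
Therefore $y \in \mathcal{X}_{\varepsilon}^{\gamma}(x)$. (In fact, the limsup version already follows from upper semicontinuity of the indicator of a closed set, without boundary regularity. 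Fatou applied to $\limsup$ gives $\mathbb{P}_x(g(y,\xi)\le\gamma) \ge \limsup_k \mathbb{P}_x(g(y_k,\xi)\le\gamma)$, so the boundary assumption is only a convenience here.)

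The main obstacle, mild as it is, is this pointwise-limit step: handling points $\xi$ on the threshold boundary and justifying the interchange of limit and integral. I would present the Fatou/upper-semicontinuity argument as the core, and note that Assumption~\ref{am:Boundary regularity} upgrades it to full continuity in $y$, which will be needed later for closedness of the graph of $\mathcal{X}_{\varepsilon}^{\gamma}$ jointly in $(x,y)$.
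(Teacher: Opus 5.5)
Your proposal is correct and matches the paper's proof: closedness of $\mathcal{X}_{\varepsilon}^{\gamma}(x)$ via a convergent sequence and pointwise convergence (off a null set) of the indicators, then Weierstrass for nonemptiness and compactness of $\Phi(x)$. Your side remark is also right. The reverse-Fatou/upper-semicontinuity bound alone gives $\mathbb{P}_x(g(y,\xi)\le\gamma)\ge\limsup_k \mathbb{P}_x(g(y_k,\xi)\le\gamma)$, so neither boundary regularity nor (A5) is actually needed here. The paper's own limsup step likewise uses only this.
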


\begin{proof}
Since $X$ is compact, it suffices to show that $\mathcal{X}_{\varepsilon}^{\gamma}(x)$ is closed. Let $y_{n} \rightarrow y$ with $y_{n} \in \mathcal{X}_{\varepsilon}^{\gamma}(x)$, so $\mathbb{P}_{x}\left(g\left(y_{n}, \xi\right) \leq \gamma\right) \geq 1-\varepsilon$ \text{ for all} $n$. For each $\xi$, continuity of $y \mapsto g(y, \xi)$ implies $g\left(y_{n}, \xi\right) \rightarrow g(y, \xi)$. Under Assumption \ref{am:Boundary regularity}, the indicators $\mathbf{1}_{\left\{g\left(y_{n}, \xi\right) \leq \gamma\right\}}$ converge almost surely to $\mathbf{1}_{\{g(y, \xi) \leq \gamma\}}$. Applying Fatou's lemma to $1-\mathbf{1}_{\left\{g\left(y_{n}, \xi\right) \leq \gamma\right\}}$, or equivalently using the upper semicontinuity induced by the boundary condition, yields $\mathbb{P}_{x}(g(y, \xi) \leq \gamma) \geq \limsup _{n \rightarrow \infty} \mathbb{P}_{x}\left(g\left(y_{n}, \xi\right) \leq \gamma\right) \geq 1-\varepsilon$, so $y \in \mathcal{X}_{\varepsilon}^{\gamma}(x)$.

Thus $\mathcal{X}_{\varepsilon}^{\gamma}(x)$ is closed and therefore compact. Since $f$ is continuous, it attains its minimum on this set, and the set of minimizers is compact.    
\end{proof}

\begin{proposition}[Closed graph and upper hemicontinuity]
\label{pro:Closed graph and upper hemicontinuity}
 Suppose that Assumptions \ref{am:Regularity} and \ref{am:Boundary regularity} hold. Then $\Phi$ has a closed graph and is upper hemicontinuous.    
\end{proposition}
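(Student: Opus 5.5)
Our approach is to apply a Berge maximum theorem argument to the correspondence $x \mapsto \mathcal{X}_{\varepsilon}^{\gamma}(x)$. Closedness of $\Phi$ will follow once we know two things: $\mathcal{X}_{\varepsilon}^{\gamma}$ has a closed graph, and it is lower hemicontinuous in an appropriate sense. Upper hemicontinuity then comes for free. Since $\Phi$ takes values in the compact set $X$, a closed graph implies upper hemicontinuity with compact values, by the standard closed-graph characterization for correspondences into a compact Hausdorff space. So the real work is proving that the graph of $\Phi$ is closed.

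\textbf{Step 1 (closed graph of the feasibility correspondence).} We show that the map $(x,y)\mapsto p(x,y):=\mathbb{P}_x(g(y,\xi)\le\gamma)$ is jointly continuous on $X\times X$. Take $(x_n,y_n)\to(x,y)$ and consider the indicator $h_n(\xi)=\mathbf{1}\{g(y_n,\xi)\le\gamma\}$. By (A3) and Assumption~\ref{am:Boundary regularity}, $h_n\to h:=\mathbf{1}\{g(y,\xi)\le\gamma\}$ pointwise off the $\mathbb{P}_x$-null set $\{g(y,\xi)=\gamma\}$.

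To combine this with the weak convergence $\mathbb{P}_{x_n}\Rightarrow\mathbb{P}_x$ from (A5), we would invoke the extended continuous mapping theorem. This needs $\Xi$ to be a metric space, an assumption implicit in (A5). It also needs $g$ jointly continuous in $(y,\xi)$, so that $h_n(\xi_n)\to h(\xi)$ whenever $\xi_n\to\xi$ with $\xi$ outside a $\mathbb{P}_x$-null set.

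We expect this step to be the main obstacle. The stated assumptions give only continuity in $y$ for fixed $\xi$, so we will have to add joint continuity of $g$ (or an equicontinuity condition) explicitly. Under that addition, $p(x_n,y_n)\to p(x,y)$. In particular, if $p(x_n,y_n)\ge 1-\varepsilon$ then $p(x,y)\ge1-\varepsilon$, so $\mathrm{graph}\,\mathcal{X}_{\varepsilon}^{\gamma}$ is closed.

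\textbf{Step 2 (lower hemicontinuity).} We need that every $y\in\mathcal{X}_{\varepsilon}^{\gamma}(x)$ and every $x_n\to x$ admit $y_n\in\mathcal{X}_{\varepsilon}^{\gamma}(x_n)$ with $y_n\to y$. Continuity of $p$ alone does not give this at points where $p(x,y)=1-\varepsilon$ exactly. We would therefore impose a Slater-type condition: every $y$ with $p(x,y)\ge1-\varepsilon$ is a limit of points $y'$ with $p(x,y')>1-\varepsilon$. Under convexity this is implied by the existence of a strictly feasible point together with quasi-concavity of $p(x,\cdot)$. For strictly feasible $y'$, continuity of $p$ gives $y'\in\mathcal{X}_{\varepsilon}^{\gamma}(x_n)$ for all large $n$, and a diagonal argument then produces the sequence $y_n$.

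\textbf{Step 3 (closed graph of $\Phi$).} Let $x_n\to x$, $y_n\in\Phi(x_n)$, $y_n\to y$. Step 1 gives $y\in\mathcal{X}_{\varepsilon}^{\gamma}(x)$. For any $z\in\mathcal{X}_{\varepsilon}^{\gamma}(x)$, Step 2 yields $z_n\in\mathcal{X}_{\varepsilon}^{\gamma}(x_n)$ with $z_n\to z$. Optimality gives $f(y_n)\le f(z_n)$, and continuity of $f$ (A1) gives $f(y)\le f(z)$. Hence $y\in\Phi(x)$, and the graph is closed. Nonemptiness and compactness of the values come from Proposition~\ref{pro:Compactness of feasible sets and best responses}. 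Upper hemicontinuity then follows from the closed graph and compactness of $X$, as noted at the start.
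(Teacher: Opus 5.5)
Your argument is the paper's proof step for step. Both show the graph of $x\mapsto\mathcal{X}_{\varepsilon}^{\gamma}(x)$ is closed, approximate any competitor $z\in\mathcal{X}_{\varepsilon}^{\gamma}(x)$ by $z_n\in\mathcal{X}_{\varepsilon}^{\gamma}(x_n)$, pass to the limit in $f(y_n)\le f(z_n)$, and get upper hemicontinuity from the closed graph and compactness of $X$.

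The paper, however, derives the two delicate steps from the stated assumptions alone. It cites ``continuity of $g$'' for the first and builds $z_n$ ``by continuity and boundary regularity'' for the second. You instead add joint continuity of $g$ and a Slater-type density condition. You are right to, because the proposition as stated is false, and it fails exactly at your Step 2.

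Take $X=\Xi=[0,1]$, $f(y)=y$, $g(y,\xi)=\xi-y$, $\gamma=0$, $\varepsilon=\tfrac12$, and $\mathbb{P}_x=(\tfrac12-\tfrac x4)\,\mathrm{Unif}[0,\tfrac14]+(\tfrac12+\tfrac x4)\,\mathrm{Unif}[\tfrac34,1]$. Assumptions~\ref{am:Regularity} and~\ref{am:Boundary regularity} hold: the measures are atomless, $x\mapsto\int\varphi\,d\mathbb{P}_x$ is affine, and $g$ is jointly continuous and affine in $y$. Here $p(x,\cdot)$ is just the distribution function of $\mathbb{P}_x$.
\begin{itemize}
\item Since $p(0,\cdot)\equiv\tfrac12$ on $[\tfrac14,\tfrac34]$, we get $\Phi(0)=\{\tfrac14\}$.
\item For $x>0$, $p(x,\cdot)<\tfrac12$ on $[0,\tfrac34]$, and one computes $\Phi(x)=\{\tfrac34+\tfrac{x}{8+4x}\}\to\tfrac34$ as $x\downarrow0$.
\end{itemize}
So the graph of $\Phi$ is not closed, and $\Phi$ is not upper hemicontinuous at $0$.

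The same example refutes your parenthetical claim that a strictly feasible point plus quasi-concavity of $p(x,\cdot)$ implies the density condition. Here $p(0,\cdot)$ is nondecreasing, hence quasi-concave. Also $\bar y=1$ is strictly feasible, uniformly in $x$, since $p(x,1)=1$. Yet the feasible point $\tfrac14$ lies at distance $\tfrac12$ from the strictly feasible set $(\tfrac34,1]$. Quasi-concavity only gives $p(x,(1-\lambda)y+\lambda\bar y)\ge1-\varepsilon$ along the segment, never the strict inequality.

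Use concavity or log-concavity of $p(x,\cdot)$ instead, which does give $p(x,(1-\lambda)y+\lambda\bar y)>1-\varepsilon$ for all $\lambda\in(0,1]$. Alternatively, keep the density condition itself as the hypothesis; it correctly excludes this example. With that fix, your Steps 1--3 are a correct proof of the proposition under your strengthened hypotheses.
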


\begin{proof}
Let $x_{n} \rightarrow x$ and $y_{n} \rightarrow y$ with $y_{n} \in \Phi\left(x_{n}\right)$. Since $y_{n} \in \mathcal{X}_{\varepsilon}^{\gamma}\left(x_{n}\right)$, we have $\mathbb{P}_{x_{n}}\left(g\left(y_{n}, \xi\right) \leq \gamma\right ) \geq 1-\varepsilon$. By weak continuity of $x \mapsto \mathbb{P}_{x}$ and continuity of $g$, together with Assumption \ref{am:Boundary regularity}, it follows that $\mathbb{P}_{x}(g(y, \xi) \leq \gamma) \geq 1-\varepsilon$, hence $y \in \mathcal{X}_{\varepsilon}^{\gamma}(x)$.

Let $z \in \mathcal{X}_{\varepsilon}^{\gamma}(x)$. One can construct $z_{n} \rightarrow z$ with $z_{n} \in \mathcal{X}_{\varepsilon}^{\gamma}\left(x_{n}\right)$ (by continuity and boundary regularity). Since $y_{n}$ minimizes $f$ over $\mathcal{X}_{\varepsilon}^{\gamma}\left(x_{n}\right)$, we have $f\left(y_{n}\right) \leq f\left(z_{n}\right)$. Passing to the limit yields $f(y) \leq f(z)$, so $y \in \Phi(x)$.
Thus $\Phi$ has a closed graph. Since it has nonempty compact values, it is upper hemicontinuous.    
\end{proof}

\begin{theorem}[Existence of performative solution]
Suppose that $X$ is nonempty, compact, and convex, $\mathcal{X}_{\varepsilon}^{\gamma}(x) \neq \varnothing$ for all $x \in X$, and Assumptions \ref{am:Regularity} and \ref{am:Boundary regularity} hold. Then there exists $x^{\star} \in X$ such that
\begin{equation}
x^{\star} \in \Phi\left(x^{\star}\right) . 
\end{equation}   
\end{theorem}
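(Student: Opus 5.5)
The plan is to apply Kakutani's fixed-point theorem to the best-response operator $\Phi: X \rightrightarrows X$. Kakutani requires four things. The domain $X$ must be nonempty, compact and convex, which holds by hypothesis. $\Phi$ must have nonempty values, closed (compact) values, and convex values. Finally, $\Phi$ must have a closed graph.

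Nonemptiness and compactness of $\Phi(x)$ for every $x \in X$ follow directly from Proposition~\ref{pro:Compactness of feasible sets and best responses}, because $\mathcal{X}_{\varepsilon}^{\gamma}(x) \neq \varnothing$ is assumed for all $x$. The closed graph, and hence upper hemicontinuity, is Proposition~\ref{pro:Closed graph and upper hemicontinuity}. Once convexity of $\Phi(x)$ is shown, Kakutani gives $x^{\star}$ with $x^{\star} \in \Phi(x^{\star})$. By Definition~\ref{Performative_chance-constrained_problem}, this $x^{\star}$ is a performative solution.

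The main obstacle is convexity of $\Phi(x)$. The theorem as stated does not list it, and chance-constrained feasible sets $\{y : \mathbb{P}_x(g(y,\xi)\le\gamma)\ge 1-\varepsilon\}$ are generally nonconvex, even when each $g(\cdot,\xi)$ is convex. Convexity of $g(\cdot,\xi)$ only makes each sublevel set $\{y : g(y,\xi)\le \gamma\}$ convex. It does not make the $(1-\varepsilon)$-quantile constraint convex.

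I would therefore invoke the ``when needed'' convexity clauses of Assumption~\ref{am:Regularity}: $f$ convex and $g(\cdot,\xi)$ convex. With these, convexity of $\Phi(x)$ follows once $\mathcal{X}_{\varepsilon}^{\gamma}(x)$ is convex. The argmin of a convex function over a convex compact set is convex: if $y_1,y_2$ are minimizers with value $f^\star$, then $f(\lambda y_1+(1-\lambda)y_2)\le f^\star$, so the combination is also a minimizer.

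To get convexity of $\mathcal{X}_{\varepsilon}^{\gamma}(x)$, I would first try a structural condition. One option is quasi-concavity of $y \mapsto \mathbb{P}_x(g(y,\xi)\le\gamma)$. This holds, for example, under log-concavity of $\mathbb{P}_x$ together with joint quasi-convexity of $g$, by Prékopa's theorem. With that condition the superlevel set at $1-\varepsilon$ is convex.

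If that route is unavailable, the fallback is to work with the convex hull of the values. I would replace $\Phi$ by $\operatorname{co}\Phi$, or restrict to settings where $\Phi$ is single-valued. Uniqueness of the minimizer gives a continuous function by upper hemicontinuity, and Brouwer's theorem then applies directly.

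In the write-up I would make the convexity of $\mathcal{X}_{\varepsilon}^{\gamma}(x)$ an explicit standing hypothesis. Under it, the remaining steps are a direct citation of the two propositions followed by Kakutani.
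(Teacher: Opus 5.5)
Your skeleton is the paper's: nonempty compact values from Proposition~\ref{pro:Compactness of feasible sets and best responses}, a closed graph from Proposition~\ref{pro:Closed graph and upper hemicontinuity}, then Kakutani. The paper's proof stops there ("since $\Phi(x)\subseteq X$ and $X$ is compact and convex, Kakutani implies\ldots") and never checks that $\Phi(x)$ is convex. You are right that this is a real omission, and in fact the theorem as written is false.

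Here is a counterexample. Take $X=[-1,1]$, $f(y)=y^2$, and $\xi=(s,\zeta)$ with $\mathbb{P}_x(s=+1)=p(x):=\tfrac12+\tfrac{x}{5}\in[0.3,0.7]$ and $\zeta\sim\mathrm{Unif}[-0.1,0.1]$ independent of $s$. Let $g(y,\xi)=sy+\zeta$, $\gamma=-\tfrac12$ and $\varepsilon=\tfrac34$.
\begin{itemize}
\item Assumptions~\ref{am:Regularity} and~\ref{am:Boundary regularity} hold, including the optional convexity clauses. Indeed, $g(\cdot,\xi)$ is affine, $x\mapsto\mathbb{P}_x$ is weakly continuous, and $\mathbb{P}_x(g(y,\xi)=\gamma)=0$ because $\zeta$ has a density.
\item With $F$ the CDF of $\zeta$, we get $\mathbb{P}_x(g(y,\xi)\le\gamma)=p(x)F(-\tfrac12-y)+(1-p(x))F(y-\tfrac12)$. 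This gives $\mathcal{X}_{\varepsilon}^{\gamma}(x)=[-1,-a(x)]\cup[b(x),1]$ with $a(x)=0.4+0.05/p(x)$ and $b(x)=0.4+0.05/(1-p(x))$.
\item Hence $\Phi(x)=\{-a(x)\}$ for $x>0$, $\Phi(x)=\{b(x)\}$ for $x<0$, and $\Phi(0)=\{-\tfrac12,\tfrac12\}$.
\end{itemize}
This $\Phi$ has nonempty compact values and a closed graph, yet no fixed point.

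So no argument proves the statement under its current hypotheses. Your proposal instead proves a repaired statement, and that is the right move. Both of your main repairs are sound:
\begin{itemize}
\item Convex $f$ together with convex $\mathcal{X}_{\varepsilon}^{\gamma}(x)$ gives convex values, so Kakutani applies. The Pr\'ekopa route to convexity requires $\Xi$ Euclidean and joint quasi-convexity of $g$.
\item Alternatively, a single-valued $\Phi$ with a closed graph into compact $X$ is continuous, and Brouwer applies. This route does not even need convex feasible sets.
\end{itemize}
Discard the $\operatorname{co}\Phi$ fallback. It only yields $x^\star\in\operatorname{co}\Phi(x^\star)$, which is not a performative solution. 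In the example above it returns exactly the spurious point $x^\star=0\in\operatorname{co}\{-\tfrac12,\tfrac12\}$.

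One further caution, since you plan to cite the two propositions verbatim. The proof of Proposition~\ref{pro:Closed graph and upper hemicontinuity} asserts without argument that every $z\in\mathcal{X}_{\varepsilon}^{\gamma}(x)$ is a limit of points $z_n\in\mathcal{X}_{\varepsilon}^{\gamma}(x_n)$. This lower hemicontinuity does not follow from Assumptions~\ref{am:Regularity} and~\ref{am:Boundary regularity}.

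To see it fail, take $g(y,\xi)=\zeta-y$, $\gamma=0$ and $f(y)=y$, so that $\Phi(x)$ is the lower $(1-\varepsilon)$-quantile of $\zeta$ under $\mathbb{P}_x$. Let the atomless CDF of $\zeta$ be flat at level $1-\varepsilon$ on an interval for one $x_0$, and tilted slightly below that level for nearby $x$. Then the quantile jumps at $x_0$, and the graph of $\Phi$ is not closed.

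A careful version of your repaired theorem should therefore add a Slater-type margin, as in Assumption~\ref{Constraint qualification}, used together with the log-concavity you already propose. Alternatively, prove the closed-graph property directly rather than citing the proposition.
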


\begin{proof}
By Proposition \ref{pro:Compactness of feasible sets and best responses}, $\Phi(x)$ is nonempty and compact for all $x$. By Proposition \ref{pro:Closed graph and upper hemicontinuity}, $\Phi$ is upper hemicontinuous. Since $\Phi(x) \subseteq X$ and $X$ is compact and convex, Kakutani's fixed-point theorem implies existence of $x^{\star}$ satisfying (1.8).    
\end{proof}

\subsection{Scenario Approach}
The best-response operator $\Phi$ defined in (\ref{eq:Best-response operator}) requires solving $\Phi(x)=\arg \min _{y \in \mathcal{X}_{\varepsilon}^{\gamma}(x)} f(y)$, where $\mathcal{X}_{\varepsilon}^{\gamma}(x)$ is defined through the chance constraint $\mathbb{P}_{x}(g(y, \xi) \leq \gamma) \geq 1-\varepsilon$. In general, this constraint is intractable, as it requires evaluating probabilities under the distribution $\mathbb{P}_{x}$. To obtain a tractable approximation, we adopt a scenario-based approach.

\begin{definition}[Scenario approximation of the best-response problem]
 Fix $x \in X$ and let $\xi^{(1)}, \ldots, \xi^{(N)}$ be i.i.d. samples drawn from $\mathbb{P}_{x}$. The scenario feasible set is defined as $\mathcal{X}_{N}(x)=\bigcap_{i=1}^{N}\left\{y \in X: g\left(y, \xi^{(i)}\right) \leq \gamma\right\}$. 
The corresponding scenario optimization problem is given by
\begin{equation}
\label{eq:scenario_problem}
\begin{aligned}
\min_{y \in X} \quad & f(y) \\
\text{subject to} \quad & g\left(y, \xi^{(i)}\right) \leq \gamma, \quad \forall i=1, \ldots, N.
\end{aligned}
\end{equation}

We define the empirical best-response operator by $\Phi_{N}(x)=\arg \min _{y \in \mathcal{X}_{N}(x)} f(y)$.
For each fixed $x, \Phi_{N}(x)$ is a random set induced by the samples $\left\{\xi^{(i)}\right\}_{i=1}^{N}$. Thus, $\Phi_{N}$ provides a data-driven approximation of the best-response mapping $\Phi$.    
\end{definition}

To distinguish the sources of randomness, we use $\mathbb{P}_{x}$ to denote the probability measure governing the uncertainty $\xi$, and $\mathbb{P}_{\xi^{(1: N)}}$ to denote the probability measure induced by the i.i.d. sampling of the scenarios.
Define the violation probability of a decision $y \in X$ under $\mathbb{P}_{x}$ as $V_{x}(y):=\mathbb{P}_{x}(g(y, \xi)>\gamma)$. We now establish finite-sample guarantees for the scenario solution.

\begin{theorem}[Scenario feasibility guarantee]
 Fix $x \in X$ and assume: (i) $f$ and $g(\cdot, \xi)$ are convex in $y$, (ii) the scenario problem (\ref{eq:scenario_problem}) admits a unique optimizer $y_{N}$ with probability one. Let $d$ denote the dimension of the decision variable. Then, for any $\varepsilon, \beta \in(0,1)$, if $N$ satisfies
\begin{equation}
\label{eq:Scenario feasibility guarantee}
\sum_{i=0}^{d-1}\binom{N}{i} \varepsilon^{i}(1-\varepsilon)^{N-i} \leq \beta
\end{equation}
we have $\mathbb{P}_{\xi^{(1: N)}}\left(V_{x}\left(y_{N}\right) \leq \varepsilon\right) \geq 1-\beta .$   
\end{theorem}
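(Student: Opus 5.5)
The key observation is that, with $x$ held fixed, the measure $\mathbb{P}_{x}$ does not depend on the optimization variable $y$. The scenario problem \eqref{eq:scenario_problem} is therefore a standard convex scenario program with i.i.d. samples drawn from the single distribution $\mathbb{P}_{x}$. The performative feedback plays no role in this statement, so the plan is to reproduce the classical Campi--Garatti argument. It proceeds in three steps: bound the number of support constraints by $d$, use exchangeability to write the violation probability as an integral over the distribution of $V_x(y_N)$, and compare that distribution with the Beta-type worst case.

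\textbf{Step 1 (support constraints).} Call constraint $i$ a support constraint if removing it changes the optimizer. Using convexity of $f$ and of $g(\cdot,\xi^{(i)})$, together with uniqueness of $y_N$, we show via Helly's theorem that there are at most $d$ support constraints almost surely. The argument is by contradiction. Suppose there were $d+1$ support constraints. For each one, removing it yields an optimizer whose objective value is strictly smaller than $f(y_N)$, or equal to it but excluded by uniqueness. Consider the convex sets $\{y: f(y)< f(y_N)\}$ and $\{y: g(y,\xi^{(i)})\le\gamma\}$. Every $d+1$ of these intersect, so Helly's theorem forces a common point, which contradicts the optimality of $y_N$. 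We will be careful with the strict-versus-nonstrict level sets, adding a lexicographic tie-break if needed; uniqueness is assumed, which helps here.

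\textbf{Step 2 (combinatorial decomposition).} Reduce to the fully supported case. For each $I\subset\{1,\dots,N\}$ with $|I|=d$, let $y_I$ be the optimizer using only the constraints indexed by $I$. Almost surely $y_N=y_I$ for some $I$ for which all the remaining $N-d$ samples satisfy $g(y_I,\xi^{(j)})\le\gamma$. By the i.i.d. structure and symmetry,
\[
\mathbb{P}_{\xi^{(1:N)}}\big(V_x(y_N)>\varepsilon\big)\le \binom{N}{d}\int_{(\varepsilon,1]}(1-v)^{N-d}\,dF_d(v),
\]
where $F_d$ is the distribution of $V_x(y_I)$ for $|I|=d$. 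A sharper version gives equality in the fully supported case. Because this conditioning uses independence of the other samples from $y_I$, the exact bound $\sum_{i=0}^{d-1}\binom{N}{i}\varepsilon^i(1-\varepsilon)^{N-i}$ follows once we establish the worst-case identity.

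\textbf{Step 3 (the main obstacle).} Obtaining the tight binomial tail, rather than the looser bound $\binom{N}{d}(1-\varepsilon)^{N-d}$, is the hard part. Here I would follow Campi and Garatti (2008). Since the support constraints are almost surely exactly $d$ in the fully supported case, the identity
\[
\binom{N}{d}\int_0^1(1-v)^{N-d}\,dF_d(v)=1 \quad \text{for all } N\ge d
\]
determines all moments of $F_d$. By the Hausdorff moment problem, $F_d(v)=v^d$. Substituting this into Step 2 gives exactly $1-\mathrm{Beta}$, which equals the stated sum. For the non-fully-supported case, one perturbs or "heats" the problem to a fully supported one and argues monotonicity of the bound. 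Alternatively, if allowed, we simply invoke this classical theorem for the fixed measure $\mathbb{P}_x$. Under \eqref{eq:Scenario feasibility guarantee} this yields $\mathbb{P}_{\xi^{(1:N)}}(V_x(y_N)\le\varepsilon)\ge1-\beta$.
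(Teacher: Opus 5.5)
Your route differs from the paper's, and yours is the one that actually produces the stated bound. The paper's proof is a short union-bound sketch. It first reasons as if $y_N$ were independent of the samples (``for each sampled constraint $\mathbb{P}_x(g(y_N,\xi^{(i)})\le\gamma)\le 1-\varepsilon$''). It then enumerates candidate support sets and asserts that a union bound gives $\sum_{i=0}^{d-1}\binom{N}{i}\varepsilon^i(1-\varepsilon)^{N-i}$. Made rigorous, a union bound over $d$-subsets with conditioning on $\xi^{(I)}$ is exactly your Step~2 inequality. Bounding the integrand by $(1-\varepsilon)^{N-d}$ then yields only $\binom{N}{d}(1-\varepsilon)^{N-d}$, which dominates the binomial tail, so the theorem's hypothesis does not make it $\le\beta$. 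The weights $\varepsilon^i$ come from identifying the law of $V_x(y_I)$ as $F_d(v)=v^d$, which is what your moment identity plus Hausdorff uniqueness does, together with a separate treatment of degenerate instances. The paper buys brevity by leaning tacitly on the classical theory. You buy a genuine derivation (exact in the fully supported case) at the cost of the Campi--Garatti machinery. Your opening remark, that for fixed $x$ this is literally the classical theorem for the single measure $\mathbb{P}_x$, is the right justification for citing it.

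A few points need tightening.
\begin{itemize}
\item In Step~1 the Helly family must be $\{y:f(y)<f(y_N)\}$, $X$, and \emph{all} $N$ constraint sets; otherwise the common point need not be feasible. Note that $X$ must be convex here, which the statement leaves implicit.
\item More importantly, ``at most $d$ support constraints'' does not give Step~2's claim $y_N=y_I$ for some $|I|=d$. With degenerate samples (e.g.\ duplicated constraints, possible if $\mathbb{P}_x$ has atoms) the support constraints alone need not reproduce $y_N$. Use a basis instead: the family above has empty intersection, so some $d+1$ members already do, necessarily including the sublevel set. This gives at most $d$ sampled constraints whose (a.s.\ unique) optimizer is $y_N$, and you then pad to size $d$.
\item In Step~3 the identity $\binom{N}{d}\int_0^1(1-v)^{N-d}\,dF_d(v)=1$ requires that a.s.\ \emph{exactly one} $d$-subset works. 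Two such subsets $I\neq I'$ would leave the subproblem on $I\cup I'$ with fewer than $d$ support constraints, which is a null event under full support.
\item Your heating sentence is only a pointer, so the degenerate case rests on the citation. Since $f$ is a general convex function, cite a version phrased via the support-constraint (Helly) dimension, which your Step~1 bounds by $d$. Epigraph-lifting to a linear objective instead would cost you a dimension.
\end{itemize}
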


\begin{proof}
Fix $x$ and consider the random samples $\left\{\xi^{(i)}\right\}_{i=1}^{N}$. Let $V_{x}(y)=\mathbb{P}_{x}(g(y, \xi)>\gamma)$. We bound the probability that the scenario solution violates the chance constraint, i.e., $\mathbb{P}_{\xi^{(1: N)}}\left(V_{x}\left(y_{N}\right)>\right. \varepsilon)$. If $V_{x}\left(y_{N}\right)>\varepsilon$, then for each sampled constraint, $\mathbb{P}_{x}\left(g\left(y_{N}, \xi^{(i)}\right) \leq \gamma\right) \leq 1-\varepsilon$. Hence, the probability that $y_{N}$ satisfies all $N$ sampled constraints is at most $(1-\varepsilon)^{N}$.

The key structural property of convex scenario programs is that the optimizer $y_{N}$ is determined by at most $d$ support constraints. Therefore, one can enumerate all possible subsets of active constraints of size at most $d-1$. Applying a union bound over these subsets yields $\mathbb{P}_{\xi^{(1: N)}}\left(V_{x}\left(y_{N}\right)>\varepsilon\right) \leq \sum_{i=0}^{d-1}\binom{N}{i} \varepsilon^{i}(1-\varepsilon)^{N-i}$.
Thus, if (\ref{eq:Scenario feasibility guarantee}) holds, then $\mathbb{P}_{\xi^{(1: N)}}\left(V_{x}\left(y_{N}\right)>\varepsilon\right) \leq \beta$, which proves the result.    
\end{proof}

\section{Fixed-Point Iteration \& Computational Schemes}
\subsection{Deterministic Best-Response Iteration}
We study when the induced best-response map $\Phi$ is a contraction, which yields a constructive algorithm.

\begin{definition}[Deterministic best-response iteration]
Given $x_{0} \in X$, define $\left\{x_{t}\right\}$ by $x_{t+1} \in \Phi\left(x_{t}\right)$.
\end{definition}
To obtain contraction, we impose structure on $f, g$, and the dependence $x \mapsto \mathbb{P}_{x}$.

\begin{assumption}[Strong convexity and smoothness of the objective]
\label{Strong convexity and smoothness of the objective}
The function $f: X \rightarrow \mathbb{R}$ is $\mu$-strongly convex and has $L_{f}$-Lipschitz gradient on $X$ (with $\mu>0$ ).    
\end{assumption}

\begin{assumption}[Regularity of the constraint map]
For every $\xi \in \Xi$, the map $y \mapsto g(y, \xi)$ is convex and $L_{g}$-Lipschitz on $X$; moreover, for each $y$, the function $\xi \mapsto g(y, \xi)$ is measurable and bounded.   
\end{assumption}

\begin{assumption}[Lipschitz dependence of the distribution]
 There exists $L_{P} \geq 0$ such that for all $x, x^{\prime} \in X$ and all $y \in X$,
\begin{equation}
\label{eq:Lipschitz dependence of the distribution}
\left|\mathbb{P}_{x}(g(y, \xi) \leq \gamma)-\mathbb{P}_{x^{\prime}}(g(y, \xi) \leq \gamma)\right| \leq L_{P}\left\|x-x^{\prime}\right\| . 
\end{equation}    
\end{assumption}

\begin{assumption}[Constraint qualification]
\label{Constraint qualification}
For each $x$, the feasible set $\mathcal{X}_{\varepsilon}^{\gamma}(x)$ is nonempty and there exists a unique minimizer $\phi(x):=\arg \min _{y \in \mathcal{X}_{\varepsilon}^{\gamma}(x)} f(y)$.
Moreover, a Slater-type condition holds uniformly: there exists $\bar{y} \in X$ such that $\inf _{x \in X} \mathbb{P}_{x}(g(\bar{y}, \xi) \leq \gamma) \geq 1-\varepsilon+\sigma$ for some $\sigma>0$.
\end{assumption}

Under Assumptions \ref{Strong convexity and smoothness of the objective} to \ref{Constraint qualification}, the best-response is single-valued $\Phi(x)=\{\phi(x)\}$.

\begin{proposition}[Lipschitz continuity of the best response]
\label{Lipschitz continuity of the best response}
Under Assumptions \ref{Strong convexity and smoothness of the objective} to \ref{Constraint qualification}, there exists a constant $K>0$ such that
\begin{equation}
\label{eq:Lipschitz continuity of the best response}
\left\|\phi(x)-\phi\left(x^{\prime}\right)\right\| \leq K\left\|x-x^{\prime}\right\| \quad \text { for all } x, x^{\prime} \in X 
\end{equation}
with $K$ depending only on $\left(\mu, L_{f}, L_{g}, L_{P}, \sigma\right)$. 
\end{proposition}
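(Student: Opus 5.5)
The plan is to compare the two constrained problems through their feasible sets. I will show that $\mathcal{X}_{\varepsilon}^{\gamma}(x)$ and $\mathcal{X}_{\varepsilon}^{\gamma}(x')$ are close in Hausdorff distance, with distance $O(\|x-x'\|)$. I will then turn that set perturbation into a bound on the minimizers using strong convexity of $f$.

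\textbf{Step 1: feasible sets move Lipschitzly.} Write $p_x(y):=\mathbb{P}_{x}(g(y,\xi)\le\gamma)$ and set $\delta:=L_P\|x-x'\|$. By the Lipschitz dependence of the distribution, any $y\in\mathcal{X}_{\varepsilon}^{\gamma}(x)$ satisfies $p_{x'}(y)\ge 1-\varepsilon-\delta$.

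To restore feasibility I move toward the Slater point $\bar y$ of Assumption \ref{Constraint qualification}. For $\theta\in[0,1]$ put $y_\theta=(1-\theta)y+\theta\bar y$. Convexity of $g(\cdot,\xi)$ gives the inclusion
\[
\{g(y,\xi)\le\gamma\}\cap\{g(\bar y,\xi)\le\gamma\}\subseteq\{g(y_\theta,\xi)\le\gamma\}.
\]
A union bound then yields $p_{x'}(y_\theta)\ge 1-\varepsilon-\delta-(\varepsilon-\sigma)$. This is too weak, and it is the main obstacle: a chance-constraint set is generally not convex, so a plain convex-combination argument fails.

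I would first try to interpret the hypotheses so that $p_{x'}(\cdot)$ is concave along segments toward $\bar y$, or at least satisfies
\[
p_{x'}(y_\theta)\ge (1-\theta)\,p_{x'}(y)+\theta\,p_{x'}(\bar y).
\]
The paper's setting presumably intends such a condition, since it treats $\mathcal{X}_{\varepsilon}^{\gamma}(x)$ as convex enough to have a unique minimizer. Under that inequality,
\[
p_{x'}(y_\theta)\ge 1-\varepsilon-(1-\theta)\delta+\theta\sigma \ge 1-\varepsilon
\quad\text{when}\quad \theta=\frac{\delta}{\sigma+\delta}\le \frac{L_P}{\sigma}\|x-x'\|.
\]
Hence $y_\theta\in\mathcal{X}_{\varepsilon}^{\gamma}(x')$ and $\|y_\theta-y\|\le \theta D\le (DL_P/\sigma)\|x-x'\|$, where $D=\operatorname{diam}X$ is finite because $X$ is compact. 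By symmetry, the Hausdorff distance between the two feasible sets is at most $c\|x-x'\|$ with $c=DL_P/\sigma$.

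\textbf{Step 2: from sets to minimizers.} Put $y^*=\phi(x)$ and $z^*=\phi(x')$. Step 1 gives a point $\tilde z\in\mathcal{X}_{\varepsilon}^{\gamma}(x')$ with $\|\tilde z-y^*\|\le c\|x-x'\|$, and a point $\tilde y\in\mathcal{X}_{\varepsilon}^{\gamma}(x)$ with $\|\tilde y-z^*\|\le c\|x-x'\|$. I will use the quadratic growth of a $\mu$-strongly convex function at its constrained minimizer over a convex set, namely $f(w)-f(y^*)\ge \tfrac{\mu}{2}\|w-y^*\|^2$. Applied to both problems, this gives
\[
\tfrac{\mu}{2}\|\tilde y-y^*\|^2\le f(\tilde y)-f(y^*),\qquad \tfrac{\mu}{2}\|\tilde z-z^*\|^2\le f(\tilde z)-f(z^*).
\]
Adding the two and using that $f$ is Lipschitz on compact $X$ with constant $G=\sup_X\|\nabla f\|$, the right side is at most
\[
|f(\tilde y)-f(z^*)|+|f(\tilde z)-f(y^*)|\le 2Gc\|x-x'\|.
\]
The triangle inequality then gives
\[
\|y^*-z^*\|\le \|y^*-\tilde y\|+c\|x-x'\|\le \sqrt{\tfrac{4Gc}{\mu}\|x-x'\|}+c\|x-x'\|,
\]
which is only Hölder-$\tfrac12$.

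\textbf{Step 3: upgrading to Lipschitz.} To obtain a genuinely linear rate I would use first-order optimality instead of the crude value comparison. Since $\nabla f(y^*)^\top(w-y^*)\ge0$ for all feasible $w$, strong monotonicity of $\nabla f$ gives
\[
\mu\|y^*-z^*\|^2\le \big(\nabla f(y^*)-\nabla f(z^*)\big)^\top(y^*-z^*).
\]
The right side is bounded by the two variational-inequality residuals, $\nabla f(y^*)^\top(y^*-\tilde y)+\nabla f(z^*)^\top(z^*-\tilde z)$, plus cross terms involving $\nabla f(y^*)-\nabla f(z^*)$ and the perturbations. These cross terms are controlled by the $L_f$-Lipschitz gradient together with the $c\|x-x'\|$ estimates. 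The resulting quadratic inequality in $\|y^*-z^*\|$ yields
\[
\|y^*-z^*\|\le K\|x-x'\|,\qquad K=O\!\left(\frac{(L_f+G)\,c}{\mu}\right).
\]

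The delicate point is absorbing the leftover term $G\,c\|x-x'\|$, which is linear rather than quadratic in the perturbation. One way is a constrained perturbation result such as Robinson's theorem, under Slater; another is to use $L_g$ and $\sigma$ to bound the Lagrange multipliers of the chance constraint uniformly. Either route makes $K$ depend only on $(\mu,L_f,L_g,L_P,\sigma)$ and $D$.
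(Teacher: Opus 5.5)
Your outline is the paper's outline: move feasible points using the uniform Slater point, then compare minimizers using strong convexity. The argument breaks at your Step 3, and the break cannot be repaired.

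\textbf{Where Step 3 fails.} Take $\tilde y\in\mathcal{X}_{\varepsilon}^{\gamma}(x)$ near $z^*$ and $\tilde z\in\mathcal{X}_{\varepsilon}^{\gamma}(x')$ near $y^*$. The two variational inequalities then give
\[
\mu\|y^*-z^*\|^2\le \nabla f(y^*)^\top(\tilde y-z^*)+\nabla f(z^*)^\top(\tilde z-y^*).
\]
After you split off the $L_f$ cross term, what remains is $\nabla f(z^*)^\top[(\tilde y-z^*)+(\tilde z-y^*)]$. This term is of size up to $Gc\|x-x'\|$, carries no factor $\|y^*-z^*\|$, and does not cancel in general. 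So you only get $\mu a^2\le L_f c\,\delta a+2Gc\,\delta$, which gives $a=O(\sqrt{\delta})$, the same as Step 2.

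Your suggested repairs do not change this:
\begin{itemize}
\item Bounded multipliers (which Slater does give) lead, via the Lagrangians $f+\lambda(1-\varepsilon-p_x)$, only to $\mu\|y^*-z^*\|^2\le(\lambda+\lambda')L_P\|x-x'\|$. That is again a square root.
\item Robinson-type strong regularity needs $\nabla_y p_x$ to exist and depend Lipschitzly on $x$. The Lipschitz-dependence assumption controls only the values $p_x(y)$, not this gradient.
\end{itemize}
Two secondary points: Steps 2--3 need $\mathcal{X}_{\varepsilon}^{\gamma}(x)$ to be convex, and Step 1 needs your concavity hypothesis. Neither follows from Assumptions~\ref{Strong convexity and smoothness of the objective}--\ref{Constraint qualification}.

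\textbf{The statement itself is false under these assumptions.} Your square-root rate is sharp. Here is a counterexample.
\begin{itemize}
\item Setup: $d=2$, $y=(s,t)$, $X=[-A,A]^2$ with $A$ large, $\gamma=0$, and $\xi=(w,b)\in[0,1]^2$.
\item Constraint: $g(y,\xi)=t+\kappa s^2-\kappa w^2\tanh(s/w)-b$, with the $\tanh$ term read as $0$ when $w=0$. Since $|\tanh''|<0.77$, each $g(\cdot,\xi)$ is convex, Lipschitz and bounded on $X$.
\item Distribution: under $\mathbb{P}_x$, let $b\sim\mathrm{Unif}[0,1]$, and let $w=\omega:=\sqrt{m/\kappa}$ be deterministic, where $m:=\tfrac12\min(x_1^+,2\kappa)$.
\item Then $p_x=\min\{1,\max\{0,1-Q_x\}\}$, where $Q_x(y)=t+\kappa s^2-m\tanh(s\sqrt{\kappa/m})$ is convex in $y$. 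So your Step 1 hypothesis also holds here.
\item Lipschitz dependence: with $u=s\sqrt{\kappa/m}$, we have $\partial_m[m\tanh(s\sqrt{\kappa/m})]=\tanh u-\tfrac12u\operatorname{sech}^2u$, which is bounded. Hence the Lipschitz-dependence assumption holds uniformly in $y$.
\item Feasible set: $\mathcal{X}_{\varepsilon}^{\gamma}(x)=\{t\le\tau_x(s)\}$ with $\tau_x(s)=\varepsilon-\kappa s^2+m\tanh(s/\omega)$, which is concave. The point $\bar y=(0,-1)$ is a uniform Slater point with $\sigma=\varepsilon$.
\item Objective: $f(y)=\tfrac12 s^2+\tfrac12(t-R)^2$ with $R>\varepsilon+\kappa$. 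Then $\phi(x)$ is the projection of $(0,R)$. Its abscissa minimizes the strictly convex function $F(s)=\tfrac12s^2+\tfrac12(R-\tau_x(s))^2$.
\item For $x_1\le 0$: $\phi(x)=(0,\varepsilon)$.
\item For $0<x_1\le2\kappa$: $\tau_x'(c\omega)=\kappa\omega(\operatorname{sech}^2c-2c)$. This gives $F'(c\omega)\le\omega[c-\kappa(R-\varepsilon-\kappa)(\operatorname{sech}^2c-2c)]<0$ for a small constant $c$. So the abscissa of $\phi(x)$ exceeds $c\sqrt{x_1/(2\kappa)}$.
\item Conclusion: $\|\phi(x_1,0)-\phi(0,0)\|\ge c\sqrt{x_1/(2\kappa)}$, so $\phi$ is not Lipschitz.
\end{itemize}

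\textbf{The paper's proof has the same hole.} It passes from $|f(y)-f(y')|\le c_2\|x-x'\|$ plus strong convexity to $\|y-y'\|\le 2c_2\|x-x'\|/\mu$. That is exactly the inference you correctly saw yields only H\"older-$\tfrac12$. Its displayed strong-monotonicity inequality also has the wrong sign. Your diagnosis of where the difficulty lies is more accurate than the paper's argument.

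\textbf{What would make a linear rate true.} You need an extra hypothesis, for example that $(x,y)\mapsto\nabla_y\mathbb{P}_x(g(y,\xi)\le\gamma)$ exists and is Lipschitz.
\begin{itemize}
\item With concavity of $p_x$ and Slater, this gives $\|\nabla_y p_x\|\ge\sigma/\operatorname{diam}X$ at active points.
\item The multiplier then becomes Lipschitz, and the Lagrangian comparison closes linearly.
\item The resulting $K$ depends on that new constant and on $\operatorname{diam}X$, not only on $(\mu,L_f,L_g,L_P,\sigma)$.
\end{itemize}
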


\begin{proof}
Define the value function $\psi(x):=\min _{y \in X}\left\{f(y): \mathbb{P}_{x}(g(y, \xi) \leq \gamma) \geq 1-\varepsilon\right\}$.
By Assumption \ref{Strong convexity and smoothness of the objective}, $f$ is strongly convex, hence the minimizer $\phi(x)$ is unique.
We show that the feasible sets vary Lipschitzly in $x$. Fix $x, x^{\prime}$ and any $y \in X$. By (\ref{eq:Lipschitz dependence of the distribution}), $\mathbb{P}_{x^{\prime}}(g(y, \xi) \leq \gamma) \geq \mathbb{P}_{x}(g(y, \xi) \leq \gamma)-L_{P}\left\|x-x^{\prime}\right\|$. Hence, if $y \in \mathcal{X}_{\varepsilon}^{\gamma}(x)$, then $\mathbb{P}_{x^{\prime}}(g(y, \xi) \leq \gamma) \geq 1-\varepsilon-L_{P}\left\|x-x^{\prime}\right\| .$
Using the uniform Slater condition (Assumption \ref{Constraint qualification}) and standard constraint-perturbation arguments for convex programs, one can construct for each feasible $y$ a nearby point $\tilde{y}$ feasible for $x^{\prime}$ with $\|\tilde{y}-y\| \leq c_{1} L_{P}\|x-x^{\prime}\|$ for some constant $c_{1}$ depending on $(L_{g}, \sigma)$. 

Now compare optimal solutions. Let $y=\phi(x)$ and $y^{\prime}=\phi(x^{\prime})$. Using strong convexity of $f$, we have $\frac{\mu}{2}\|y-y^{\prime}\|^{2} \leq \langle\nabla f(y^{\prime})-\nabla f(y), y-y^{\prime}\rangle$. By optimality of $y$ and $y^{\prime}$ over their respective feasible sets and the feasibility-transfer argument above, one obtains $f(y^{\prime})-f(y) \leq c_{2}\|x-x^{\prime}\|$ and $f(y)-f(y^{\prime}) \leq c_{2}\|x-x^{\prime}\|$. Combining these with strong convexity yields $\|y-y^{\prime}\| \leq \frac{2 c_{2}}{\mu}\|x-x^{\prime}\|$, which proves (\ref{eq:Lipschitz continuity of the best response}).
\end{proof}

\begin{theorem}[Contraction and convergence]
\label{Contraction and convergence}
Under Assumptions \ref{Strong convexity and smoothness of the objective} to \ref{Constraint qualification}, suppose the Lipschitz constant $K$ in (\ref{eq:Lipschitz continuity of the best response}) satisfies $K<1$. Then, (i) $\phi$ is a contraction on $X$ and admits a unique fixed point $x^{\star}$, and (ii) for any $x_{0} \in X$, the iteration $x_{t+1}=\phi(x_{t})$ converges linearly to $x^{\star}$ satisfying $\|x_{t}-x^{\star}\| \leq K^{t}\|x_{0}-x^{\star}\|$.
\end{theorem}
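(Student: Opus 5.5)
The argument is a direct application of the Banach fixed-point theorem, using Proposition~\ref{Lipschitz continuity of the best response} to supply the contraction property.

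\emph{Self-map and completeness.} Under Assumptions~\ref{Strong convexity and smoothness of the objective} to~\ref{Constraint qualification}, the best response is single-valued, $\Phi(x)=\{\phi(x)\}$. Since $\mathcal{X}_{\varepsilon}^{\gamma}(x)\subseteq X$, we have $\phi(x)\in X$, so $\phi:X\to X$ is a well-defined self-map. The set $X$ is a nonempty compact subset of $\mathbb{R}^d$, hence closed. With the Euclidean metric it is therefore a complete metric space.

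\emph{Contraction and uniqueness, part (i).} By Proposition~\ref{Lipschitz continuity of the best response}, $\|\phi(x)-\phi(x')\|\le K\|x-x'\|$ for all $x,x'\in X$. With $K<1$ this means $\phi$ is a contraction. Banach's theorem then gives a unique $x^\star\in X$ with $\phi(x^\star)=x^\star$. Equivalently $x^\star\in\Phi(x^\star)$, so $x^\star$ is a performative solution. It is also the only one, because any performative solution is a fixed point of the single-valued map $\phi$. Uniqueness can also be checked directly: if $x^\star=\phi(x^\star)$ and $z^\star=\phi(z^\star)$, then $\|x^\star-z^\star\|\le K\|x^\star-z^\star\|$, which forces $x^\star=z^\star$ since $K<1$.

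\emph{Linear rate, part (ii).} For the iteration $x_{t+1}=\phi(x_t)$, apply the Lipschitz bound with $x'=x^\star=\phi(x^\star)$ to get
\[
\|x_{t+1}-x^\star\|=\|\phi(x_t)-\phi(x^\star)\|\le K\|x_t-x^\star\|.
\]
Induction on $t$ gives $\|x_t-x^\star\|\le K^t\|x_0-x^\star\|$, which tends to $0$ linearly.

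\emph{Where the difficulty lies.} None of these steps is difficult once Proposition~\ref{Lipschitz continuity of the best response} is available. The real burden sits in that proposition: a Lipschitz modulus $K$ valid uniformly on $X$, derived from the Slater margin $\sigma$ and the distributional sensitivity $L_P$. Existence of the fixed point, without the rate, also follows from the earlier Kakutani-based existence theorem, but the Banach argument gives existence, uniqueness and the rate together.
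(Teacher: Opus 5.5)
Your proposal is correct and follows the paper's proof: Proposition~\ref{Lipschitz continuity of the best response} makes $\phi$ a $K$-contraction on the complete metric space $X\subset\mathbb{R}^d$, and Banach's fixed-point theorem then gives the unique fixed point $x^\star$ and the bound $\|x_t-x^\star\|\le K^t\|x_0-x^\star\|$. You also spell out three steps the paper leaves to Banach's theorem: that $\phi$ maps $X$ into itself, that $X$ is complete, and the induction giving the rate.
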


\begin{proof}
By Proposition \ref{Lipschitz continuity of the best response}, $\phi$ is K-Lipschitz. If $K<1$, it is a contraction on the complete metric space $X \subset \mathbb{R}^{d}$. The claims follow from Banach's fixed-point theorem.    
\end{proof}

The contraction modulus $K$ increases with: (i) the sensitivity $L_{P}$ of the distribution $\mathbb{P}_{x}$ to $x$, (ii) the steepness $L_{g}$ of the constraint, and decreases with: (iii) the strong convexity $\mu$ of $f$ and the constraint margin $\sigma$. Hence, contraction holds when the decision has a weak effect on the induced distribution (small $L_{P}$ ) and the objective is sufficiently well-conditioned (large $\mu$ ).

\subsection{Scenario-Based (Stochastic) Fixed-Point Iteration}

We now build a stochastic counterpart to the deterministic iteration. Recall from Theorem \ref{Contraction and convergence} that, under Assumptions \ref{Strong convexity and smoothness of the objective} to \ref{Constraint qualification}, the best-response map $\phi$ is a contraction with modulus $K<1$ and admits a unique fixed point $x^{\star}$.

Because the exact map $\phi$ is generally intractable, we approximate it using the scenario-based operator, which naturally induces a stochastic iteration. Formally, let $(\Omega, \mathcal{F}, \mathbb{P})$ be a probability space equipped with the filtration $\mathcal{F}_{t}:=\sigma\left(x_{0}, \xi_{0}^{(1: N_{0})}, \ldots, \xi_{t-1}^{(1: N_{t-1})}\right)$. At each step $t$, conditional on $x_{t}$, we draw i.i.d. samples $\{\xi_{t}^{(i)}\}_{i=1}^{N_{t}}$ from $\mathbb{P}_{x_{t}}$ and update the decision variable via:
\begin{equation}
\label{eq:stochastic_iteration}
x_{t+1}=\phi_{N_{t}}\left(x_{t} ; \xi_{t}^{(1: N_{t})}\right).
\end{equation}
To analyze its convergence, we introduce an error decomposition by defining the approximation error as $e_{t}:=\phi_{N_{t}}\left(x_{t} ; \xi_{t}^{(1: N_{t})}\right)-\phi\left(x_{t}\right)$. This allows us to express the stochastic recursion equivalently as $x_{t+1}=\phi\left(x_{t}\right)+e_{t}$.

\begin{assumption}[Uniform scenario approximation]
\label{Uniform scenario approximation}
There exist deterministic sequences $\left\{\eta_{N}\right\}$ and $\left\{\delta_{N}\right\}$ with $\eta_{N} \downarrow 0$ and $\delta_{N} \downarrow 0$ such that, for all $x \in X$,

\begin{equation}
\mathbb{P}\left(\left\|\phi_{N}\left(x ; \xi^{(1: N)}\right)-\phi(x)\right\|>\eta_{N}\right) \leq \delta_{N} .
\end{equation}    
\end{assumption}

Assumption \ref{Uniform scenario approximation} requires that the scenario-based optimizer $\phi_{N}\left(x ; \xi^{(1: N)}\right)$ approximates the exact optimizer $\phi(x)$ uniformly over $x \in X$, with high probability. The quantity $\eta_{N}$ controls the magnitude of the approximation error, while $\delta_{N}$ controls the probability of a large deviation event.
This assumption is justified by combining two ingredients: (i) Scenario feasibility guarantees, which ensure that the scenario solution satisfies the chance constraint with probability at least $1-\delta_{N}$; (ii) Stability of the optimization problem, which follows from strong convexity of $f$ and regularity of the feasible set, and implies that small perturbations in the constraint set lead to small perturbations in the optimizer.

Under Assumptions \ref{Strong convexity and smoothness of the objective} to \ref{Constraint qualification}, standard sensitivity analysis for strongly convex programs yields $\eta_{N}=O\left(\sqrt{\frac{d}{N}}\right)$. Moreover, from the scenario optimization bound, one can choose $\delta_{N}$ such that $\delta_{N} \leq \exp (-c N \varepsilon)$ for some constant $c>0$.

\begin{assumption}[Sample schedule]
\label{Sample schedule}

The sample sizes $\left\{N_{t}\right\}$ are chosen so that $\sum_{t=0}^{\infty} \delta_{N_{t}}<\infty \text { and } \eta_{N_{t}} \rightarrow 0 $
 
\end{assumption}

Assumption \ref{Sample schedule} ensures that the stochastic approximation errors vanish almost surely. The summability condition $\sum_{t} \delta_{N_{t}}<\infty$ implies, by the Borel-Cantelli lemma, that only finitely many iterations incur large approximation errors. The condition $\eta_{N_{t}} \rightarrow 0$ ensures that the magnitude of the approximation error converges to zero.
It can be verified that a logarithmic sample size schedule, $N_{t}=\mathcal{O}(d+\ln t)$, is sufficient to satisfy the vanishing error conditions in Assumption \ref{Sample schedule}. The detailed derivation is deferred to the extended version. Building upon this, we establish the convergence of the stochastic fixed-point iteration.

\begin{theorem}[Almost sure convergence of stochastic iteration]
\label{Almost sure convergence of stochastic iteration}
Under Assumptions \ref{Strong convexity and smoothness of the objective} to \ref{Constraint qualification} and Assumptions \ref{Uniform scenario approximation} to \ref{Sample schedule}, the sequence $\left\{x_{t}\right\}$  converges $\mathbb{P}$-almost surely to the unique performative solution $x^{\star}$.    
\end{theorem}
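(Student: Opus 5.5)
The plan is to treat the recursion $x_{t+1}=\phi(x_t)+e_t$ as a perturbed contraction and show that the perturbations $e_t$ vanish almost surely. Once $\|e_t\|\to 0$ a.s. is established, a deterministic argument applied pathwise gives the result.

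\textbf{Step 1: Borel--Cantelli.} For each $t$, let $A_t:=\{\|e_t\|>\eta_{N_t}\}$. Conditional on $\mathcal{F}_t$, the point $x_t$ is fixed and the samples $\xi_t^{(1:N_t)}$ are i.i.d. from $\mathbb{P}_{x_t}$. Since Assumption~\ref{Uniform scenario approximation} is uniform in $x$, it can be applied with $x=x_t$, giving $\mathbb{P}(A_t\mid\mathcal{F}_t)\le \delta_{N_t}$. Taking expectations yields $\mathbb{P}(A_t)\le\delta_{N_t}$, and Assumption~\ref{Sample schedule} gives $\sum_t\mathbb{P}(A_t)<\infty$. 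By the first Borel--Cantelli lemma, which requires no independence, almost surely only finitely many $A_t$ occur. Hence on an event $\Omega_0$ of probability one there is a random $T(\omega)$ with $\|e_t\|\le\eta_{N_t}$ for all $t\ge T$. Because $\eta_{N_t}\to 0$, it follows that $e_t\to0$ on $\Omega_0$. The only care needed here is a measurability point: uniformity over $x$ is exactly what allows the bound to be applied at the random point $x_t$. I expect this to be the main, though mild, obstacle.

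\textbf{Step 2: Pathwise contraction estimate.} Fix $\omega\in\Omega_0$. By Theorem~\ref{Contraction and convergence}, $x^\star=\phi(x^\star)$ and $\phi$ is $K$-Lipschitz with $K<1$ (Proposition~\ref{Lipschitz continuity of the best response}). Then
\[
\|x_{t+1}-x^\star\|\le\|\phi(x_t)-\phi(x^\star)\|+\|e_t\|\le K\|x_t-x^\star\|+\|e_t\|.
\]
Unrolling this from time $T$ gives
\[
\|x_{t}-x^\star\|\le K^{t-T}\|x_T-x^\star\|+\sum_{s=T}^{t-1}K^{t-1-s}\|e_s\|.
\]
The first term tends to $0$ because $X$ is compact, so $\|x_T-x^\star\|\le\operatorname{diam}X$.

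\textbf{Step 3: The convolution term.} The second term is a discrete convolution of the summable kernel $K^k$ with a null sequence, and it tends to $0$ by the standard Toeplitz-type argument. Given $\epsilon>0$, choose $S\ge T$ so that $\|e_s\|\le\epsilon$ for all $s\ge S$, and split the sum at $S$. The tail part is at most $\epsilon/(1-K)$. The head part is at most $K^{t-S}\sum_{s<S}\|e_s\|$, which tends to $0$ as $t\to\infty$; the errors $e_s$ are bounded since all iterates lie in the compact set $X$. Hence $\limsup_t\|x_t-x^\star\|\le\epsilon/(1-K)$ for every $\epsilon>0$, so $x_t\to x^\star$ on $\Omega_0$, i.e., almost surely.

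\textbf{Identification of the limit.} Finally, $x^\star$ is the unique fixed point of $\phi$. Since $\Phi(x)=\{\phi(x)\}$, it is also the unique performative solution, which completes the argument.
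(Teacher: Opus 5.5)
Your proposal is correct and follows the paper's proof step for step. It applies the uniform scenario bound conditionally to get $\mathbb{P}(A_t\mid\mathcal{F}_t)\le\delta_{N_t}$, then Borel--Cantelli to obtain a random time $T(\omega)$, then unrolls $\|x_{t+1}-x^\star\|\le K\|x_t-x^\star\|+\|e_t\|$ and splits the geometric convolution at a time $S$. The only cosmetic difference is that you run the convolution argument with $\|e_s\|\to 0$ instead of the bound $\eta_{N_s}$ used in the paper, and this changes nothing.
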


\begin{proof}
 By Theorem \ref{Contraction and convergence}, the map $\phi$ is a contraction with modulus $K<1$ and admits a unique fixed point $x^{\star}$. Define the error $e_{t}:=x_{t+1}-\phi\left(x_{t}\right)$ and the events $B_{t}:=\left\{\left\|e_{t}\right\|>\eta_{N_{t}}\right\} .$
Since the samples $\xi_{t}^{\left(1: N_{t}\right)}$ are drawn conditionally on $x_{t}$, the conditional probability is taken with respect to $\mathbb{P}_{x_{t}}^{\otimes N_{t}}$. Hence, by Assumption \ref{Uniform scenario approximation}, $\mathbb{P}\left(B_{t} \mid \mathcal{F}_{t}\right) \leq \delta_{N_{t}} \text { a.s. }$

Taking expectations yields $\mathbb{P}\left(B_{t}\right) \leq \delta_{N_{t}}$. Since $\sum_{t=0}^{\infty} \delta_{N_{t}}<\infty$, the Borel-Cantelli lemma implies $\mathbb{P}\left(B_{t} \text { infinitely often }\right)=0$.
Thus, with probability one, there exists a random time $T(\omega)$ such that for all $t \geq T(\omega)$, $\left\|e_{t}\right\| \leq \eta_{N_{t}} .$ Fix $\omega$ in this probability-one event. For all $t \geq T(\omega)$, we have $\left\|x_{t+1}-x^{\star}\right\| \leq\left\|\phi\left(x_{t}\right)-\phi\left(x^{\star}\right)\right\|+\left\|e_{t}\right\| \leq K\left\|x_{t}-x^{\star}\right\|+\eta_{N_{t}} .$
Define $a_{t} := \|x_{t}-x^{\star}\|$. Then, for all $t \geq T(\omega)$, we have $a_{t+1} \leq K a_{t}+\eta_{N_{t}}$. Iterating this inequality from $T:=T(\omega)$ yields $a_{t} \leq K^{t-T} a_{T} + \sum_{s=T}^{t-1} K^{t-1-s} \eta_{N_{s}}.$

Since $K \in(0,1)$, the first term $K^{t-T} a_{T} \rightarrow 0$ as $t \rightarrow \infty$. To bound the summation, fix any $\varepsilon>0$. Because $\eta_{N_{s}} \rightarrow 0$, there exists an integer $S \geq T$ such that $\eta_{N_{s}} \leq \varepsilon(1-K)$ for all $s \geq S$. Consequently, for all $t \geq S+1$, the tail of the sum satisfies $\sum_{s=S}^{t-1} K^{t-1-s} \eta_{N_{s}} \leq \varepsilon(1-K) \sum_{k=0}^{t-1-S} K^{k} \leq \varepsilon.$
The remaining finitely many terms for $s=T, \ldots, S-1$ vanish as $t \rightarrow \infty$ due to geometric decay. Hence, $\limsup _{t \rightarrow \infty} a_{t} \leq \varepsilon$. Since $\varepsilon>0$ is chosen arbitrarily, we conclude that $a_{t} \rightarrow 0$, which implies $x_{t} \rightarrow x^{\star}$ $\mathbb{P}$-almost surely.
\end{proof}

\section{Equivalent Game-Theoretic Model}
\label{sec:Game-Theoretic}
In the preceding sections, the decision-dependent distribution mapping $x \mapsto \mathbb{P}_{x}$ was treated abstractly. To provide a concrete micro-foundation for this mechanism and to bridge our performative optimization framework with adversarial applications (such as the LLM safety guardrails discussed later), we now introduce an equivalent game-theoretic perspective. We present a formal model in which the probability measure governing uncertainty is endogenously generated by the best response of a homogeneous population of agents, and the optimizer responds to this induced distribution. Under this lens, the performative solution naturally coincides with a Nash fixed point.

\subsection{Agent's Problem with Baseline and Endogenous Change of Measure}
Let $X \subset \mathbb{R}^{d}$ be a nonempty compact set of decisions and $\mathcal{A} \subset \mathbb{R}^{k}$ be a nonempty compact set of agent actions. Let $(\Xi, \mathcal{F})$ be a measurable outcome space. We introduce a baseline probability space $(\Xi, \mathcal{F}, v_{0})$, where $v_{0}$ represents the distribution of outcomes generated by a reference population behavior. This baseline corresponds to a nominal action $a_{0} \in \mathcal{A}$, which can be interpreted as the default or status-quo level of effort in the absence of strategic adaptation. The measure $v_{0}$ therefore encodes the population profile under baseline behavior and serves as the reference point for both evaluation and deviation.

\subsubsection{Agent utility relative to baseline} 
The agent evaluates actions relative to the baseline population behavior captured by the reference distribution $v_{0}$. Formally, let $U: \mathcal{A} \times X \times \mathcal{P}(\Xi) \rightarrow \mathbb{R}$ be a utility function, and write $U\left(a ; x, v_{0}\right)$ to emphasize that the evaluation of an action $a$ depends on both the decision $x$ and the baseline distribution $v_{0}$.
A general and flexible specification is obtained by expressing the utility in terms of the distribution $v_{a}$ induced by action $a$. Specifically, suppose that for each $a \in \mathcal{A}$, the induced distribution $v_{a}$ satisfies $v_{a} \ll v_{0}$ with likelihood ratio $\frac{d v_{a}}{d v_{0}}(\xi)=L(a, \xi)$. 
Then the utility can be written in the form
\begin{equation}\label{eq:utility_general}
U\left(a ; x, v_{0}\right)=\int_{\Xi} r(x, \xi) d v_{a}(\xi)-C\left(v_{a} \| v_{0}\right) 
\end{equation}
where $r: X \times \Xi \rightarrow \mathbb{R}$ is a measurable reward function and $C\left(v_{a} \| v_{0}\right)$ is a nonnegative functional that measures the cost of deviating from the baseline distribution $v_{0}$.

The first term represents the expected performance under the distribution induced by the agent's action, while the second term penalizes deviations from the baseline population behavior. The functional $C$ may encode effort costs, risk sensitivity, or statistical divergence from the baseline.

\subsubsection{KL-regularized utility as a canonical example} 
A particularly important and tractable choice is obtained by taking $C$ to be the Kullback-Leibler divergence. In this case, the utility becomes $U\left(a ; x, v_{0}\right)=\int_{\Xi} r(x, \xi) d v_{a}(\xi)-\tau D_{\mathrm{KL}}\left(v_{a} \| v_{0}\right)$,
where $\tau>0$ is a regularization parameter and $D_{\mathrm{KL}}\left(v_{a} \| v_{0}\right)=\int_{\Xi} \log \left(\frac{d v_{a}}{d v_{0}}(\xi)\right) d v_{a}(\xi).$
Using the likelihood ratio representation, the utility admits the equivalent expression $U\left(a ; x, v_{0}\right)=\int_{\Xi} r(x, \xi) L(a, \xi) d v_{0}(\xi)-\tau \int_{\Xi} L(a, \xi) \log L(a, \xi) d v_{0}(\xi).$
Under this specification, the agent selects an action $a$ that induces a probability measure $v_{a}$ balancing two competing effects: improving expected performance under $r(x, \xi)$ and remaining close to the baseline distribution $v_{0}$. The resulting induced measure can be interpreted as an exponentially tilted version of $v_{0}$, with the parameter $\tau$ governing the strength of the deviation from the baseline.

\subsubsection{Best-response correspondence} 
For each $x \in X$, the agent solves $\mathcal{B}(x):=\arg \max _{a \in \mathcal{A}} U\left(a ; x, v_{0}\right) .$ We assume that $\mathcal{B}(x)$ is nonempty for all $x \in X$ and admits a measurable selection $b: X \rightarrow \mathcal{A} \text{ with } b(x) \in \mathcal{B}(x).$
The mapping $b$ describes how the agent adjusts effort in response to the decision $x$, taking the baseline population behavior as reference.

\subsubsection{Endogenous change of measure} 
The key modeling step is that the agent's action does not merely select an outcome, but alters the probability law itself. For each action $a \in \mathcal{A}$, we associate a probability measure $v_{a}$ on $(\Xi, \mathcal{F})$ satisfying $v_{a} \ll v_{0}, \quad \frac{d v_{a}}{d v_{0}}(\xi)=L(a, \xi),$ where $L: \mathcal{A} \times \Xi \rightarrow \mathbb{R}_{+}$ is a measurable likelihood ratio such that $\int_{\Xi} L(a, \xi) d v_{0}(\xi)=1 \text { for all } a \in \mathcal{A}.$
Thus, each action $a$ induces a tilting of the baseline measure $v_{0}$. The baseline corresponds to $a_{0}$, for which typically $L\left(a_{0}, \xi\right) \equiv 1$ and hence $v_{a_{0}}:=v_{0}$.

\subsubsection{Induced distribution} 
Given a decision $x \in X$, the agent responds with $b(x)$, and the resulting distribution of outcomes is defined by $\mathbb{P}_{x}:=v_{b(x)}.$ Equivalently, for any bounded measurable function $\varphi: \Xi \rightarrow \mathbb{R}$, $\int_{\Xi} \varphi(\xi) d \mathbb{P}_{x}(\xi)=\int_{\Xi} \varphi(\xi) L(b(x), \xi) d v_{0}(\xi).$
The above construction makes explicit that the dependence of the distribution on the decision $x$ is entirely mediated by the agent's best response. Specifically, we have the composition $x \longmapsto b(x) \longmapsto v_{b(x)} \longmapsto \mathbb{P}_{x}.$ The optimizer does not directly manipulate the probability law. Instead, it selects $x$, which changes the agent's incentives, leading to a new action $b(x)$ that induces a change of measure from $v_{0}$ to $\mathbb{P}_{x}$.

\subsection{Optimizer's Problem}
Given an action $a \in \mathcal{A}$ chosen by the agent, a probability measure $v_{a}$ on $(\Xi, \mathcal{F})$ is induced. The optimizer evaluates feasibility under the distribution generated by the agent's action $a$. For each $a \in \mathcal{A}$, define the feasible set $\tilde{\mathcal{X}}_{\varepsilon}^{\gamma}(a):=\left\{y \in X: v_{a}(g(y, \xi) \leq \gamma) \geq 1-\varepsilon\right\} . $
Equivalently, the feasibility condition can be written as $\int_{\Xi} \mathbf{1}\{g(y, \xi) \leq \gamma\} d v_{a}(\xi) \geq 1-\varepsilon.$
If the likelihood ratio representation is used, i.e., $d v_{a} / d v_{0}=L(a, \xi)$, then the constraint admits the form $\int_{\Xi} \mathbf{1}\{g(y, \xi) \leq \gamma\} L(a, \xi) d v_{0}(\xi) \geq 1-\varepsilon.$ The set $\tilde{\mathcal{X}}_{\varepsilon}^{\gamma}(a)$ consists of all decisions $y$ that satisfy the chance constraint under the distribution induced by the agent's action $a$. Hence, feasibility depends on $a$ through the mapping $a \mapsto v_{a}$. Define the optimizer best-response correspondence $\mathcal{R}(a):=\arg \min _{y \in \tilde{\mathcal{X}}_{\varepsilon}^{\gamma}(a)} f(y).$
For each action $a \in \mathcal{A}$, the set $\mathcal{R}(a)$ consists of all decisions that minimize the objective function subject to feasibility under the distribution $v_{a}$ induced by $a$.

\subsection{Nash Equilibrium}
We consider a game between the agent and the optimizer. The agent chooses an action $a \in \mathcal{A}$ in response to $x \in X$, while the optimizer chooses a decision $x \in X$ in response to $a \in \mathcal{A}$.

\begin{definition}\label{def:nash_equilibrium}
Consider the game $\mathcal{G}=(\{O, A\},\{X, \mathcal{A}\},\{U, f\})$, where the optimizer $O$ selects a decision $x \in X$ and the agent $A$ selects an action $a \in \mathcal{A}$. The agent evaluates actions according to the utility function $U\left(a ; x, v_{0}\right)$, and its best-response correspondence is given by $\mathcal{B}(x)=\arg \max _{a \in \mathcal{A}} U\left(a ; x, v_{0}\right)$. Each action $a \in \mathcal{A}$ induces a probability measure $v_{a}$ on $(\Xi, \mathcal{F})$, and the optimizer evaluates decisions under this induced distribution.

Given $a \in \mathcal{A}$, the optimizer solves $\min _{x \in X}\left\{f(x): v_{a}(g(x, \xi) \leq \gamma) \geq 1-\varepsilon\right\}$, with corresponding best-response correspondence $\mathcal{R}(a)=\arg \min _{y \in \tilde{\mathcal{X}}_{\varepsilon}^{\gamma}(a)} f(y)$.
A pair $\left(x^{\star}, a^{\star}\right) \in X \times \mathcal{A}$ is called a Nash equilibrium if
\begin{equation}\label{eq:nash_agent}
a^{\star} \in \mathcal{B}\left(x^{\star}\right), 
\end{equation}
and
\begin{equation}\label{eq:nash_optimizer}
x^{\star} \in \mathcal{R}\left(a^{\star}\right) . 
\end{equation}
\end{definition}

The Nash equilibrium defined above is equivalent to the performative solution characterized previously. Recall that the performative best-response operator is defined by $\Phi(x):=\arg \min _{y \in \mathcal{X}_{\varepsilon}^{\gamma}(x)} f(y) $, where the feasible set $\mathcal{X}_{\varepsilon}^{\gamma}(x)$ is defined using the distribution induced by the agent's best response $b(x)$.

\begin{proposition}\label{prop:equivalence}
A pair $(x^{\star}, a^{\star})$ is a Nash equilibrium in the sense of \eqref{eq:nash_agent} and \eqref{eq:nash_optimizer} if and only if
\begin{equation}\label{eq:performative_fixed_point}
x^{\star} \in \Phi\left(x^{\star}\right) .
\end{equation}
\end{proposition}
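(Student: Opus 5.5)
The plan is to unfold both definitions and observe that they coincide once the induced measure is identified with the measure generated by the agent's selected action. The key identity is $\mathbb{P}_{x}=v_{b(x)}$, which gives
\[
\mathcal{X}_{\varepsilon}^{\gamma}(x)=\tilde{\mathcal{X}}_{\varepsilon}^{\gamma}(b(x)).
\]
Hence $\Phi(x)=\mathcal{R}(b(x))$ for every $x\in X$. I would verify this first. For each $y\in X$, membership in either set is the same inequality
\[
\int_{\Xi}\mathbf{1}\{g(y,\xi)\le\gamma\}\,L(b(x),\xi)\,dv_{0}(\xi)\ge 1-\varepsilon .
\]
The two argmin sets are then equal because they minimize the same $f$ over the same set.

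For the direction ($\Leftarrow$), suppose $x^{\star}\in\Phi(x^{\star})$. Set $a^{\star}:=b(x^{\star})$. Then $a^{\star}\in\mathcal{B}(x^{\star})$ because $b$ is a selection of $\mathcal{B}$, which is \eqref{eq:nash_agent}. Also $x^{\star}\in\Phi(x^{\star})=\mathcal{R}(b(x^{\star}))=\mathcal{R}(a^{\star})$, which is \eqref{eq:nash_optimizer}. So the pair $(x^{\star},a^{\star})$ is a Nash equilibrium.

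For the direction ($\Rightarrow$), suppose $(x^{\star},a^{\star})$ satisfies \eqref{eq:nash_agent}--\eqref{eq:nash_optimizer}. We have $x^{\star}\in\mathcal{R}(a^{\star})$, but we need $x^{\star}\in\mathcal{R}(b(x^{\star}))$. This is the main obstacle: $a^{\star}$ is some element of $\mathcal{B}(x^{\star})$, not necessarily the selected one. I would handle it in one of two ways, and state the choice explicitly. (a) If $\mathcal{B}(x)$ is a singleton, as under the strictly concave KL-regularized utility, then $a^{\star}=b(x^{\star})$ and we are done. (b) More generally, since $\mathbb{P}_{x}$ is defined through the selection $b$, one may choose the selection at $x^{\star}$ so that $b(x^{\star})=a^{\star}$. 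Redefining a measurable selection at a single point keeps it measurable, and this is consistent with the paper's wording that $\mathcal{X}_{\varepsilon}^{\gamma}(x)$ "is defined using the distribution induced by the agent's best response." A weaker alternative also suffices: all best responses in $\mathcal{B}(x^{\star})$ induce the same measure $v_{a}$. With either (a) or (b), $\tilde{\mathcal{X}}_{\varepsilon}^{\gamma}(a^{\star})=\mathcal{X}_{\varepsilon}^{\gamma}(x^{\star})$, and so $x^{\star}\in\Phi(x^{\star})$.

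Finally, I would note that the equivalence is at the level of the $x$-component. The "if" direction should be read as: some $a^{\star}$, namely $b(x^{\star})$, completes $x^{\star}$ to a Nash equilibrium.
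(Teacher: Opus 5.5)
Your argument is the paper's: identify $\tilde{\mathcal{X}}_{\varepsilon}^{\gamma}(a^{\star})$ with $\mathcal{X}_{\varepsilon}^{\gamma}(x^{\star})$ through $\mathbb{P}_{x^{\star}}=v_{a^{\star}}$, then compare the argmin sets. It is more careful than the written proof, which simply asserts $\mathbb{P}_{x^{\star}}=v_{b(x^{\star})}=v_{a^{\star}}$ for an arbitrary $a^{\star}\in\mathcal{B}(x^{\star})$ in both directions. So the selection issue you flag is a genuine gap in the paper's proof: with a fixed $b$ and two best responses inducing different measures, the forward implication can fail. Your choice $a^{\star}:=b(x^{\star})$ is the right fix for the converse.

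Two minor quibbles. Option (b) changes $\Phi$ itself, so it reinterprets the statement rather than proving it. Also, the KL-regularized utility is strictly concave in the measure $v_a$ rather than in $a$. It therefore delivers your weaker alternative, a unique induced measure (e.g.\ when $\{v_a : a\in\mathcal{A}\}$ is convex), rather than a singleton $\mathcal{B}(x)$.
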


\begin{proof}
Suppose $(x^{\star}, a^{\star})$ is a Nash equilibrium. Then $a^{\star} \in \mathcal{B}\left(x^{\star}\right)$ and $x^{\star} \in \mathcal{R}\left(a^{\star}\right)$. By definition of $\mathcal{R}\left(a^{\star}\right)$, we have $x^{\star} \in \arg \min _{y \in \tilde{\mathcal{X}}_{\varepsilon}^{\gamma}\left(a^{\star}\right)} f(y)$, where the feasible set is evaluated under the distribution $v_{a^{\star}}$.

Since $a^{\star} \in \mathcal{B}\left(x^{\star}\right)$, the distribution induced by $x^{\star}$ satisfies $\mathbb{P}_{x^{\star}}=v_{b\left(x^{\star}\right)}=v_{a^{\star}}$. Therefore, for every $y \in X$, $v_{a^{\star}}(g(y, \xi) \leq \gamma)=\mathbb{P}_{x^{\star}}(g(y, \xi) \leq \gamma)$, which implies $\tilde{\mathcal{X}}_{\varepsilon}^{\gamma}\left(a^{\star}\right)=\mathcal{X}_{\varepsilon}^{\gamma}\left(x^{\star}\right)$. It follows that $x^{\star} \in \arg \min _{y \in \mathcal{X}_{\varepsilon}^{\gamma}\left(x^{\star}\right)} f(y)=\Phi\left(x^{\star}\right)$, and hence $x^{\star} \in \Phi\left(x^{\star}\right)$.

Conversely, suppose $x^{\star} \in \Phi\left(x^{\star}\right)$. By definition, $x^{\star} \in \arg \min_{y \in \mathcal{X}_{\varepsilon}^{\gamma}\left(x^{\star}\right)} f(y)$. Let $a^{\star} \in \mathcal{B}\left(x^{\star}\right)$. Then the induced distribution satisfies $P_{x^{\star}}=v_{a^{\star}}$, and the same argument as above yields $\mathcal{X}_{\varepsilon}^{\gamma}\left(x^{\star}\right)=\tilde{\mathcal{X}}_{\varepsilon}^{\gamma}\left(a^{\star}\right)$. Therefore, $x^{\star} \in \arg \min _{y \in \tilde{\mathcal{X}}_{\varepsilon}^{\gamma}\left(a^{\star}\right)} f(y)=\mathcal{R}\left(a^{\star}\right)$, and thus $(x^{\star}, a^{\star})$ satisfies \eqref{eq:nash_agent} and \eqref{eq:nash_optimizer}.
\end{proof}

\section{Case Study: Performative Evasion Detection in LLM Safety}

Modern AI security systems, such as guardrails for large language models (LLMs) and content moderation platforms, operate in a continuous cat-and-mouse dynamic. Designers deploy detection mechanisms to block malicious inputs, and in response, attackers continuously adapt through obfuscation, prompt mutations, or evasion strategies. This interaction is inherently performative: the deployed defense does not merely react to adversarial behavior but actively shapes it, causing the underlying data distribution to evolve as a function of the defense itself.

In practice, modeling these adaptive adversaries explicitly is notoriously difficult due to their heterogeneous and emerging strategies. This fragility motivates a model-free, data-driven design paradigm. Rather than attempting to parameterize adversarial behavior, defenders can rely on an iterative, closed-loop process—collecting realized evasion samples induced by the current defense to refine the next update. This performative scenario approach provides a principled way to design robust defenses against unknown, strategically evolving adversaries.

\subsection{Performative SVM for LLM Jailbreak Detection}
We now illustrate the performative chance-constrained framework using an emerging problem in AI safety: deploying robust guardrails against LLM jailbreaks. Viewed through the game-theoretic lens established in Section 4, this scenario naturally models a game where the defender (the optimizer) deploys a classifier to detect malicious prompt embeddings, while the strategic attacker (the agent) adversarially alters the prompt distribution to evade detection. 
In real-world deployments, to meet strict latency and computational cost constraints, LLM systems often employ a multi-tiered architecture \cite{chen2023frugalgpt}. The first layer—a low-latency ``fast guardrail''—typically evaluates user prompts by operating directly on their high-dimensional vector embeddings \cite{rebedea2023nemo}. Furthermore, recent advances in representation engineering demonstrate that complex safety concepts, such as harmfulness and malicious intent, are practically linearly separable in these embedding spaces \cite{zou2023representation}. Motivated by this industry standard, let $X \subset \mathbb{R}^{d}$ be a compact set of classifier parameters $w \in \mathbb{R}^{d}$ representing such a linear safety filter deployed in an embedding space. Let $(\xi, y) \in \mathbb{R}^{d} \times\{-1,+1\}$ denote a random prompt embedding and its underlying intent label, where $y=-1$ indicates a malicious/jailbreak intent and $y=+1$ indicates a benign query.

A central feature of this security setting is that the data distribution depends on the deployed guardrail, and we write $(\xi, y) \sim \mathbb{P}_{w}$. This dependence arises naturally due to the presence of strategic agents (e.g., red-teamers or attackers). When a guardrail $w$ successfully blocks explicit malicious queries, attackers utilize LLMs to semantically mutate their prompts—rewriting them as hypothetical or academic inquiries—to bypass the filter while retaining the core malicious intent. Thus, the deployed model directly alters the observed data distribution $\mathbb{P}_{w}$.

We now provide a microfoundation for this dependence using a strategic evasion model. Let $\xi_{0} \in \mathbb{R}^{d}$ denote a baseline feature vector of an explicit malicious prompt drawn from a fixed distribution $\mathbb{P}_{0}$. After observing the deployed guardrail $w$, an attacker chooses a mutated feature vector $\xi$ by solving $ \max_{\xi \in \mathbb{R}^{d}} y\langle w, \xi\rangle - c(\|\xi-\xi_{0}\|), $
where $c: \mathbb{R}_{+} \rightarrow \mathbb{R}_{+}$ is convex, increasing, and grows superlinearly. This explicitly instantiates the agent's utility maximization problem established in the game-theoretic framework of Section 4. Specifically, the utility balances the expected evasion success (which acts as the agent's reward) against the semantic distortion cost (the penalty for deviating too far from the original intent).

The first-order optimality condition implies that the displacement $\Delta:=\xi-\xi_{0}$ satisfies $ y w = c^{\prime}(\|\Delta\|) \frac{\Delta}{\|\Delta\|}, $
so that $\Delta$ is aligned with $y w$ and its magnitude is determined by the inverse of $c^{\prime}$. Consequently, the optimal strategic response can be written as $ \xi = \xi_{0} + y \alpha(\|w\|) w, $
where $\alpha: \mathbb{R}_{+} \rightarrow \mathbb{R}_{+}$ is an increasing but bounded function (e.g., $\alpha(r)=(\lambda+\kappa r)^{-1}$ for constants $\lambda, \kappa>0$). This captures a saturating response: as $\|w\|$ grows, the marginal benefit of further semantic manipulation diminishes. The induced distribution $\mathbb{P}_{w}$ is the pushforward of $\mathbb{P}_{0}$ under this transformation.

Define the margin-based safety constraint function $g(w,(\xi, y)):=1-y\langle w, \xi\rangle$. The condition $g(w,(\xi, y)) \leq 0$ corresponds to a correct classification (blocking malicious prompts and allowing benign ones) with a margin of at least one. For a tolerance level $\varepsilon \in(0,1)$, we impose the chance constraint $ \mathbb{P}_{w}(y\langle w, \xi\rangle \geq 1) \geq 1-\varepsilon. $
Under the strategic response model, the margin of the mutated prompt evaluated by the current guardrail can be expressed as $y\langle w, \xi\rangle = y\langle w, \xi_{0}\rangle + \alpha(\|w\|)\|w\|^{2}$. Thus, the chance constraint becomes $ \mathbb{P}_{0}\left(y\langle w, \xi_{0}\rangle \geq 1-\alpha(\|w\|)\|w\|^{2}\right) \geq 1-\varepsilon. $ This representation explicitly demonstrates the dual dependence in our theoretical framework: the decision variable $w$ acts as the evaluator in the constraint function $g$, while simultaneously acting as the environment trigger that shapes the data distribution $\mathbb{P}_{w}$. 
We consider the strongly convex regularized objective $f(w)=\frac{1}{2}\|w\|^{2}$. The performative chance-constrained guardrail problem is formulated as 
\[ \min_{w \in X} \frac{1}{2}\|w\|^{2} \quad \text{subject to} \quad \mathbb{P}_{w}(y\langle w, \xi\rangle \geq 1) \geq 1-\varepsilon. \]

For a reference guardrail $w \in X$, the best-response mapping is $\phi(w)=\arg \min_{\tilde{w} \in \mathcal{X}_{\varepsilon}(w)} \frac{1}{2}\|\tilde{w}\|^{2}$. To obtain a tractable approximation, we adopt the scenario approach. For a fixed $w \in X$, we draw i.i.d. prompt samples $\left(\xi^{(1)}, y^{(1)}\right), \ldots,\left(\xi^{(N)}, y^{(N)}\right) \sim \mathbb{P}_{w}$ from the LLM via simulated API interactions. The chance constraint is replaced by empirical constraints $y^{(i)}\langle\tilde{w}, \xi^{(i)}\rangle \geq 1$ for all $i=1, \ldots, N$. Denoting its unique solution by $\phi_{N}(w)$, the induced stochastic iteration is given by $ w_{t+1}=\phi_{N_{t}}\left(w_{t} ;\left\{\left(\xi_{t}^{(i)}, y_{t}^{(i)}\right)\right\}_{i=1}^{N_{t}}\right), $
defining a performative learning dynamics where the guardrail is repeatedly updated using adversarial prompts generated in response to its own deployment.

\begin{figure*}[t]
  \centering
  \includegraphics[width=0.9\textwidth]{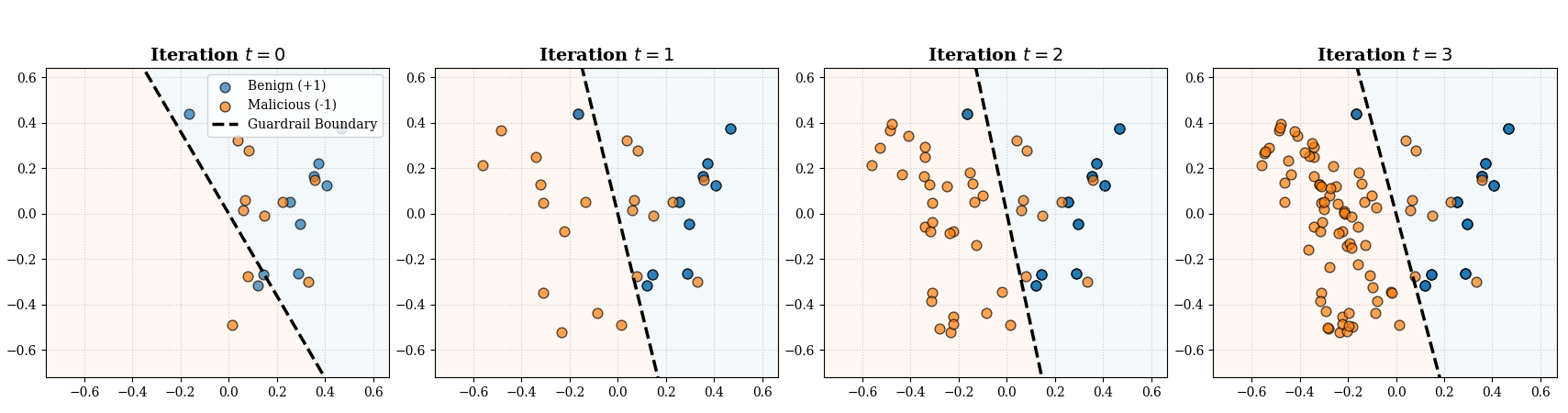}
  \caption{Evolution of the performative guardrail and the induced semantic shift in the embedding space. The dashed line denotes the decision boundary $w_t$, and the data points represent benign (blue) and mutated malicious (orange) prompts.}
  \label{fig:performativity_evolution}
\end{figure*}

\subsection{Numerical Illustration}
We illustrate this performative scenario-based framework using a linear SVM guardrail trained on text embeddings under strategic prompt mutation.

Figure \ref{fig:performativity_evolution} shows the evolution of the classifier and the induced data distribution across iterations. Several key phenomena emerge from the results. First, the data distribution physically evolves in response to the guardrail. As the classifier successfully blocks explicit threats at $t=0$, attackers mutate their features, leading to a noticeable semantic shift (the upward-right migration of the orange cluster) at subsequent iterations. This perfectly visualizes the performative nature of the system.

Second, as shown in Figure \ref{fig:convergence_and_growth} (Left), the guardrail weights and the adversarial distribution co-evolve toward a stable configuration. The high-dimensional iterates swiftly approach a fixed point, consistent with our theoretical characterization of performative solutions as fixed points of the best-response mapping. Figure \ref{fig:convergence_and_growth} (Right) illustrates the increasing sample size $N_{t}$ which reduces the stochastic error in the scenario approximation. Early iterations use small sample sizes for rapid but noisy updates, while later iterations accumulate a larger pool of mutated scenarios, yielding more accurate approximations. This matches the theoretical requirement that $N_{t}$ must grow logarithmically to ensure almost sure convergence.

\begin{figure}[htbp]
  \centering
  \includegraphics[width=1\linewidth]{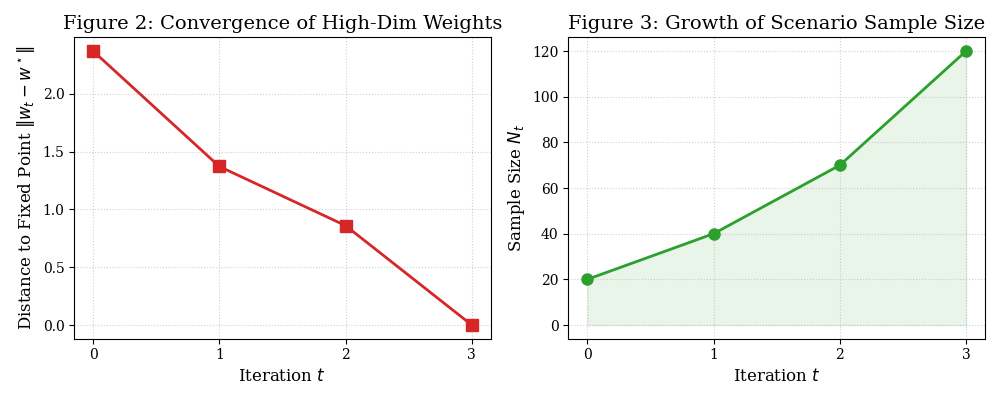}
  \caption{(Left) Convergence of the stochastic iteration to the performative fixed point in the embedding space. (Right) Logarithmic growth of the sample size $N_{t}$ used in the scenario optimization.}
  \label{fig:convergence_and_growth}
  \vspace{-3mm}
\end{figure}

\section{CONCLUSIONS}

In this paper, we introduced a novel framework for performative scenario optimization to address chance-constrained problems in decision-dependent environments. By formally modeling the feedback loop between the decision-maker and the data-generating process, we characterized performative solutions as self-consistent equilibria and established their existence via Kakutani’s fixed-point theorem. To overcome the intractability of unknown induced distributions, we developed a scenario-based approximation method. We proved that a stochastic fixed-point iteration, equipped with a logarithmic sample size schedule, converges almost surely to the unique performative solution. The theoretical findings were empirically validated through an LLM jailbreaking problem, confirming the stable co-evolution of the classifier and the data distribution.

\addtolength{\textheight}{-12cm}   % This command serves to balance the column lengths
                                  % on the last page of the document manually. It shortens
                                  % the textheight of the last page by a suitable amount.
                                  % This command does not take effect until the next page
                                  % so it should come on the page before the last. Make
                                  % sure that you do not shorten the textheight too much.

%%%%%%%%%%%%%%%%%%%%%%%%%%%%%%%%%%%%%%%%%%%%%%%%%%%%%%%%%%%%%%%%%%%%%%%%%%%%%%%%

\bibliographystyle{IEEEtran}
\bibliography{IEEEexample}

\end{document}